\def\1{\bm{1}}
\def\eps{{\epsilon}}
\def\mX{{\bm{X}}}
\DeclareMathAlphabet{\mathsfit}{\encodingdefault}{\sfdefault}{m}{sl}
\SetMathAlphabet{\mathsfit}{bold}{\encodingdefault}{\sfdefault}{bx}{n}
\newcommand{\E}{\mathbb{E}}
\newcommand{\R}{\mathbb{R}}
\newcommand{\KL}{D_{\mathrm{KL}}}
\DeclareMathOperator*{\argmax}{arg\,max}
\renewcommand{\forall}{\mathrm{\text{ for all }}}
\newcommand{\cost}{\mathsf{cost}}
\newcommand{\bp}{\boldsymbol{p}}
\newcommand{\bq}{\boldsymbol{q}}
\newcommand{\bw}{\boldsymbol{w}}
\renewcommand{\root}{\mathsf{root}}
\renewcommand{\tilde}{\widetilde}
\DeclareFontFamily{U}{mathb}{\hyphenchar\font45}
\DeclareFontShape{U}{mathb}{m}{n}{<5> <6> <7> <8> <9> <10> gen * mathb
<10.95> mathb10 <12> <14.4> <17.28> <20.74> <24.88> mathb12}{}
\DeclareSymbolFont{mathb}{U}{mathb}{m}{n}
\DeclareMathSymbol{\rcirclearrow}{\mathbin}{mathb}{'367}
\newcommand{\wt}{\widetilde}
\newcommand{\Ent}{{\mathrm{Ent}}}
\newcommand{\interval}{{\mathsf{interval}}}
\newcommand{\freq}{{\mathsf{freq}}}
\newcommand{\ws}{{\mathsf{work}}}
\newcommand{\pri}{{\mathsf{priority}}}
\newcommand{\depth}{{\mathsf{depth}}}
\newcommand{\poly}{{\mathrm{poly}}}
\newcommand{\prev}{\mathsf{prev}}
\newcommand{\nxt}{\mathsf{next}}
\newcommand{\Abs}[1]{\left|#1\right|}
\newcommand{\Set}[2]{\left\{#1 ~\middle\vert~ #2\right\}}
\newcommand{\todolater}[1]{}
\newcommand{\defeq}{\stackrel{\mathrm{\scriptscriptstyle def}}{=}}
\theoremstyle{plain}
\newtheorem{theorem}{Theorem}[section]
\newtheorem{lemma}[theorem]{Lemma}
\newtheorem{corollary}[theorem]{Corollary}
\theoremstyle{definition}
\newtheorem{definition}[theorem]{Definition}
\theoremstyle{remark}
\newtheorem{obs}{Observation}
\icmltitlerunning{On the Power of Learning-Augmented Search Trees}
\begin{document}

\twocolumn[
\icmltitle{On the Power of Learning-Augmented Search Trees}



\icmlsetsymbol{equal}{*}

\begin{icmlauthorlist}
\icmlauthor{Jingbang Chen}{equal,wloo}
\icmlauthor{Xinyuan Cao}{equal,gt}
\icmlauthor{Alicia Stepin}{cmu}
\icmlauthor{Li Chen}{ind}
\end{icmlauthorlist}

\icmlaffiliation{gt}{Georgia Institute of Technology}
\icmlaffiliation{wloo}{University of Waterloo}
\icmlaffiliation{cmu}{Carnegie Mellon University}
\icmlaffiliation{ind}{Independent; Part of the work by Jingbang Chen and Li Chen was done while at Georgia Tech}

\icmlcorrespondingauthor{Li Chen}{lichenntu@gmail.com}

\icmlkeywords{Learning-augmented algorithms, data structures.}

\vskip 0.3in
]



\printAffiliationsAndNotice{\icmlEqualContribution} 

\begin{abstract}
We study learning-augmented binary search trees (BSTs) via Treaps with carefully designed priorities.
The result is a simple search tree in which the depth of each item $x$ is determined by its predicted weight $w_x$.
Specifically, each item $x$ is assigned a composite priority of $-\lfloor\log\log(1/w_x)\rfloor + U(0, 1)$ where $U(0, 1)$ is the uniform random variable. By choosing $w_x$ as the relative frequency of $x$, the resulting search trees achieve static optimality.
This approach generalizes the recent learning-augmented BSTs [Lin-Luo-Woodruff ICML '22], which only work for Zipfian distributions, by extending them to arbitrary input distributions.
Furthermore, we demonstrate that our method can be generalized to a B-Tree data structure using the B-Treap approach [Golovin ICALP '09]. Our search trees are also capable of leveraging localities in the access sequence through online self-reorganization, thereby achieving the working-set property. Additionally, they are robust to prediction errors and support dynamic operations, such as insertions, deletions, and prediction updates. We complement our analysis with an empirical study, demonstrating that our method outperforms prior work and classic data structures.
\end{abstract}

\section{Introduction}
\label{sec:intro}
The development of machine learning has sparked significant interest in its potential to enhance traditional data structures. 
First proposed by \citet{kraska2018case},
the notion of \emph{learned index} 
has gained much attention
since then~\citep{kraska2018case,Ding20,ferragina2020pgm}.
Algorithms with predictions have also been developed for an
increasingly wide range of problems, including shortest path~\citep{chen2022faster}, network flow~\citep{polak2022learning,lavastida2020learnable}, matching~\citep{chen2022faster,dinitz2021faster,chen2021online}, spanning tree~\citep{erlebach2022learning}, and triangles/cycles counting~\citep{chen2022triangle}, with the goal of obtaining algorithms that get near-optimal performances
when the predictions are good, but also recover prediction-less
worst-case behavior when predictions have large errors~\citep{MV20:ALPS}.

The problem of using learning to accelerate search trees, as in the original learned index question, has been widely studied in the field of data structures, focusing on developing data structures optimal to the input sequence.
\citet{M75} showed that a nearly optimal static tree can be constructed in linear time when exact key frequencies are provided.
Extensive work on this topic culminated in the study of dynamic optimality. Tango trees~\citep{DHIP07} achieve a competitive ratio of $O(\log \log n)$ while splay trees~\citep{ST85} and Greedy BSTs~\citep{lucas1988canonical,munro2000competitiveness,demaine2009geometry} are conjectured to
be within constant factors of optimal.

Treaps, introduced by~\citet{AS89:treap}, is a class of balanced BSTs distinguished by its use of randomization to maintain a low tree height. Each node in a Treap is assigned not only a key but also a randomly generated priority value. This design enables Treaps to satisfy the \emph{Heap} property, ensuring that every node has a lower priority than its parent. In general, Treaps use randomness to ensure a low height instead of balancing the tree preemptively.
More recently, \citet{LLW22} introduced a learning-augmented Treap, demonstrating stronger guarantees compared to traditional Treaps. However, it relies on the strong assumption of the Zipfian distribution.

Inspired by this line of work, our research is driven by a series of critical questions.

\begin{mdframed}
\begin{itemize}[leftmargin=12pt]
\setlength\itemsep{0.02em}

    \item Whether a more general learning-augmented BST exists and achieves static optimality?

    \item Can such BST also obtain good guarantees under the dynamic settings?

    \item Are they robust to the errors caused by the prediction oracles?

\end{itemize}
\end{mdframed}

This paper addresses the questions affirmatively by developing new learning-augmented Treaps with carefully designed priority scores, which are applicable to \textit{arbitrary} input distributions in both static and dynamic settings.
In the static setting, we show that our learning-augmented Treaps are within a constant factor of the static optimal cost when incorporating a prediction oracle for the frequency of each item. The proposed Treaps are \textit{robust} to predicted errors, where the additional cost induced by the inaccurate prediction grows linearly with the KL divergence between the relative frequency and its estimation. 
For the dynamic setting, where the trees can undergo changes after each access, we show that given a prediction oracle for the time
interval until the next access, our data structure can
achieve the working-set bound. This bound can be viewed as
a strengthening of the static optimality bound that takes
temporal locality of keys into account. Such dynamic BSTs are robust to the prediction oracle as well, where the performance degrades smoothly with the mean absolute error between the logarithm of the generated priorities and the ground truth priorities. Additionally, under the external memory model, our learning-augmented BST can be naturally extended to a B-Tree version via B-Treaps. Experimental results demonstrate that the proposed Treap outperforms both traditional data structures and other learning-augmented data structures, even when the predictions are inaccurate.

\subsection{Overview}
\label{sec:overview}
\paragraph{Learning-Augmented Treaps via Composite Priority Functions.}
The Treap is a tree-balancing mechanism initially designed around
randomized priorities~\citep{AS89:treap}. When the priorities are assigned randomly, the resulting tree is balanced with high probability.
Intuitively, this is because the root is likely to be picked
among the middle elements. However, if some node is accessed very frequently (e.g. $10\%$
of the time), it's natural to assign it a larger priority.
Therefore, setting the priority to be a function of access
frequencies, as in~\citet{LLW22}, is a natural way to obtain
an algorithm more efficient on more skewed
access patterns.
However, when the priority is set exactly as the access frequency,
some nodes would have super-logarithmic depth:
if each element $i$ is accessed $i$ times, setting priority exactly as the frequencies results in a path of size $n.$
The total time for processing this access sequence of size $O(n^2)$ degrades to $\Omega(n^3).$
Partly as a result of this, the analysis in~\citet{LLW22} was
limited only to when frequencies are under the Zipfian distribution.

Building upon these ideas, we introduce a composite priority function, a mixture of the randomized priority function from~\citet{AS89:treap}
and the frequency-based priority function from~\citet{LLW22}.
This takes advantage of the balance coming from the randomness and manages to work without the strong assumption from~\citet{LLW22}.
Specifically, we show in \Cref{thm:compPriTreap}
that by setting the composite priority function to be
\begin{align}
-\left \lfloor \log{\log \frac{1}{w_{x}} } \right\rfloor
+
U\left(0, 1\right),
\end{align}
the expected depth of node $x$ is $O(\log(1 / w_{x}))$. The predicted score $w_x \in (0,1)$, for instance, can be set as the relative frequency or probability of each item.

Our Treap-based scheme generalizes to B-Trees,
where each node has $B$ instead of $2$ children.
These trees are highly important in external memory systems
due to the behavior of cache performances:
accessing a block of $B$ entries
has a cost comparable to the cost of accessing $O(1)$ entries.
By combining the B-Treaps \cite{golovin2009b} with the
composite priorities,
we introduce a new learning-augmented B-Tree that achieves similar
bounds under the \emph{External Memory Model}.
We show in \Cref{thm:b_tree_main}
that for any weights over elements $\bw$,
by setting the priority to
\begin{align}
\label{eq:log2logB}
-\lfloor \log_2 \log_B \frac{1}{w_x} \rfloor + U(0, 1),
\end{align}
the expected depth of node $x$ is $O(\log_B(1/w_x))$.
It is natural to see that our proposed data structures unify BSTs and B-Trees. 
For simplicity, we provide the results of B-Trees in the remaining content.

\vspace{-0.08in}
\paragraph{Static Optimality of Learning-Augmented Search Trees.} We can construct static optimal B-Trees if we set $\bw$ to be the marginal distribution of elements in the access sequence.
That is, if we know the frequencies $f_1, f_2, \ldots, f_n$ of each element that appears in the access sequence, and let $m = \sum_i f_i$ to be the length of the access sequence, then we set the score $w_x = f_x/m$ in \Cref{eq:log2logB} and the corresponding B-Tree has a total access cost that achieves the static optimality
\begin{align*}
    {\sum_{i \in [n]} f_i \log_B \frac{m}{f_i}}.
\end{align*}

\begin{figure}
    \centering
    \includegraphics[width=8cm]{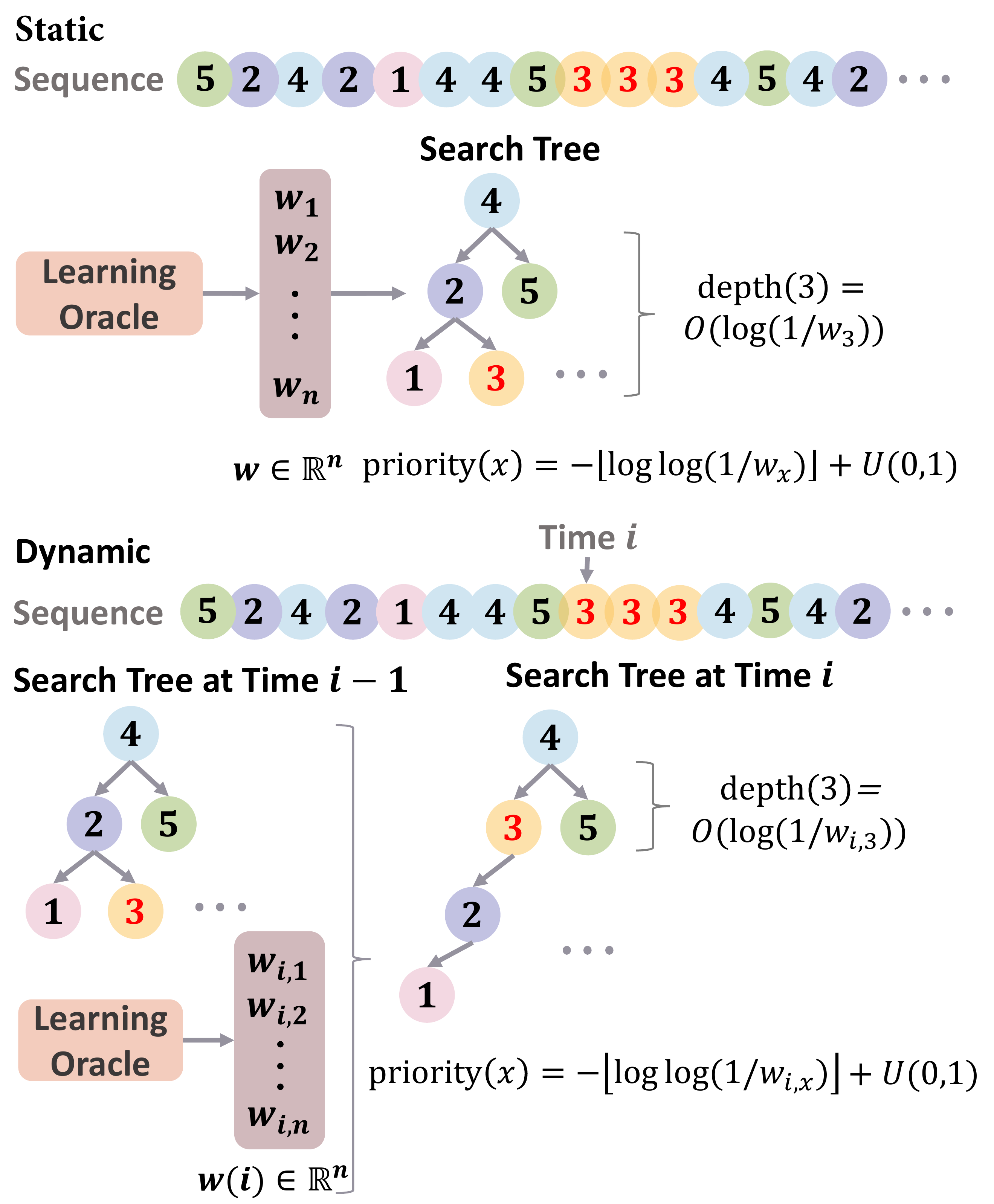}
    \caption{Sketch for static and dynamic learning augmented search trees. Since item 3 has a higher frequency around time $i$, dynamic search trees adjust the priority accordingly.}
    \label{fig:tree_overview}
    \vspace{-0.25in}
\end{figure}

\vspace{-0.18in}

\paragraph{Dynamic Learning-Augmented Search Trees.}
We also consider the dynamic setting in which we continually update the priorities
of a subset of items along with the sequence access.
Rather than a fixed priority for each item, we allow the priorities to
change as keys get accessed.
The setting has a wide range of applications in the real world.
For instance, consider accessing data in a relational database.
A sequence of access will likely access related items one after another.
So even if the entries themselves get accessed with fixed frequencies,
the distribution of the next item to be accessed can be highly dependent
on the set of recently accessed items. Consider the access sequence
\vspace{-0.05in}
\[
4,2,3,4,5,2,3,4,5,2,3,4,5,1,4
\]

\vspace{-0.15in}
versus the access sequence
\vspace{-0.05in}
\[
5,2,4,2,1,4,4,5,\textbf{3},\textbf{3},\textbf{3},4,5,4,2
\]

\vspace{-0.15in}
In both sequences, the item $4$ is accessed the most frequently.
So input-dependent search trees should place $4$ near the root.
However, in the second sequence,
the item $3$ is accessed three consecutive times
around the middle.
An algorithm that's allowed to modify the tree dynamically
can then modify the tree to place $3$ closer to the root
during those calls.
An illustration of this is in~\Cref{fig:tree_overview}. Note that we pay costs both when accessing the items and updating the trees. Hence, there is a trade-off between the costs of updating items' scores and the benefits of time-varying scores.

We study ways of designing composite priorities that cause this access
cost to match known sequence-dependent access costs of binary trees
(and their natural generalizations to B-Trees).
Here, we focus on the \textit{working-set bound}, which says that the cost
of accessing an item should be, at most, the logarithm of the number of
distinct items until they get accessed again.
To obtain this bound, we propose a new composite priority named \textit{working-set priority},
based on the number of distinct
elements between two occurrences of the same item accessed at step $i$. We give the guarantees for the dynamic Treaps with the working-set priority in \Cref{thm:btreapRIS_main}.
The dynamic search Treaps further demonstrate the power of learning scores from data. While we have more data, we can quantify the dynamic environment in a more accurate way and thus improve the efficiency of the data structure.

\paragraph{Robustness to Prediction Inaccuracy.}
Finally, we show the robustness of our
data structures with inaccuracies in prediction oracles.
In the static case, we can directly relate the overhead of having
inaccurate frequency predictions to the KL divergences between the true relative frequencies $p_x$ and their estimates $q_x$.
This is because our composite priority can take any estimate.
So plugging in the estimates $q_{x}$ gives the overall access cost
\[
m \cdot \sum_{x} p_{x} \log_2\left( \frac{1}{q_{x}} \right),
\]
which is exactly the cross entropy between $\bp$ and $\bq$.
On the other hand, the KL divergence between $\bp$ and $\bq$ is exactly
the cross entropy minus the entropy of $\bp$.
So we get that the overhead of building the tree using noisy
estimators $\bq$ instead of the true frequencies $\bp$ is exactly
$m$ times the KL divergence between $\bp$ and $\bq$.
We formalize the argument above in \Cref{sec:staOptRobust}. We also achieve robustness results in the dynamic setting in \Cref{sec:dynamic_robust}.

In all, our contributions can be summarized as follows:
\vspace{-0.1in}
\begin{itemize}
    \item We introduce composite priorities that integrate learned advice into Treaps. The BSTs and B-Trees constructed via these priorities are within constants of the static optimal ones for arbitrary distributions (\cref{sec:static_binary}, \cref{sec:static_b_main}).
    \item When allowing updating trees along with accessing items, we design a working-set priority function, and the corresponding Treaps with composite priorities can
    achieve the working-set bound (\cref{sec:dynamic_main}). 
    \item Both static and dynamic learning-augmented search trees are robust to predictions (\cref{sec:staOptRobust}, \cref{sec:dynamic_robust}).
    \item Our experiments show favorable performance compared to prior work (\cref{sec:exp}).
\end{itemize}

\subsection{Related Work}
\label{sec:related}
In recent years, there has been a surge of interest in integrating machine learning models into algorithm designs. A new field called \emph{Algorithms with Predictions}~\citep{MV20:ALPS} has garnered considerable attention, particularly in the use of machine learning models to predict input patterns to enhance performance. Examples of this approach include online graph algorithms with predictions~\citep{azar2022online}, improved hashing-based methods such as Count-Min~\citep{cormode2005improved}, and learning-augmented $k$-means clustering~\citep{ergun2021learning}. Practical oracles for predicting desired properties, such as predicting item frequencies in a data stream, have been demonstrated empirically~\citep{HIKV19, JLHRW20}.

Capitalizing on the existence of oracles that predict the properties of upcoming accesses, researchers are now developing more efficient learning-augmented data structures. Index structures in database management systems are one significant application of learning-augmented data structures. One key challenge in this domain is to create search algorithms and data structures that are efficient and adaptive to data whose nature changes over time. This has spurred interest in incorporating machine learning techniques to improve traditional search tree performance.

The first study on learned index structures~\citep{kraska2018case} used deep-learning models to predict the position or existence of records as an alternative to the traditional B-Tree or hash index. However, this study focused only on the static case. Subsequent research~\citep{ferragina2020pgm, Ding20, wu2021updatable} introduced dynamic learned index structures with provably efficient time and space upper bounds for updates in the worst case. These structures outperformed traditional B-Trees in practice, but their theoretical guarantees were often trivial, with no clear connection between prediction quality and performance. More recently, \citet{LLW22} proposed a learning-augmented BST via Treaps that works under the Zipfian distribution.

Other related work on BSTs analyses and B-Trees under the external memory model is included in \cref{appendix:related}.
\vspace{-0.1in}
\section{Learning-Augmented Binary Search Trees}
\label{sec:static_binary}

\vspace{-0.05in}
In this section, we show that
the widely taught Treap data structure
can, with small modifications, achieve the static optimality conditions
sought after in previous studies of learned
index structures~\citep{LLW22,HIKV19}. We start with definitions and basic properties of Treaps.

\begin{definition}[Treap \citep{AS89:treap}]
Let $T$ be a BST over $[n]$ and $\pri \in \R^{n}$ be a priority assignment on $[n].$
We say $(T, \pri)$ is a \emph{Treap} if $\pri_x \le \pri_y$ whenever $x$ is a descendent of $y$ in $T.$
\end{definition}

\vspace{-0.1in}
Given a priority assignment $\pri$, one can construct a BST $T$ such that $(T, \pri)$ is a Treap as follows.
Take any $x^* \in \argmax_x \pri_x$ and build Treaps on $[1, x^* - 1]$ and $[x^* + 1, n]$ recursively using $\pri.$
Then, we just make $x^*$ the parent of both Treaps.
Notice that if $\pri_x$'s are distinct, the resulting Treap is unique.

\begin{obs}
Let $\pri \in \R^n$, which assigns each item $x$ to a unique priority.
There is a unique BST $T$ such that $(T, \pri)$ is a Treap.
\end{obs}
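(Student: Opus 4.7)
\medskip

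\noindent\textbf{Proof proposal.} The plan is to prove the observation by strong induction on $n$, the number of items. The base case $n=0$ is trivial (the empty tree is the unique Treap), and $n=1$ is also immediate.

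For the inductive step, I would first argue that the root of any valid Treap $(T, \pri)$ is forced: since every element of $[n]$ is a descendant of the root in $T$, and the heap property requires $\pri_x \le \pri_{\root(T)}$ for every descendant $x$, the root must realize $\max_x \pri_x$. Because priorities are distinct by assumption, the argmax is unique, so the root is uniquely determined; call it $x^*$. Next, since $T$ is a BST, the left subtree of $x^*$ must contain exactly the keys in $[1, x^*-1]$ and the right subtree must contain exactly the keys in $[x^*+1, n]$; this is forced by the BST ordering, not a choice. Each of these subtrees, together with the restriction of $\pri$, is itself a Treap on a smaller key set, so the inductive hypothesis applies and each is unique.

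The construction given in the paragraph preceding the observation (take an argmax, recurse on the two sides) shows existence, so combining existence with the uniqueness argument above yields the claim. The only potentially subtle point, and the one I would be most careful about, is ensuring that uniqueness of the argmax is actually used: without distinct priorities, the root could be any maximizer, and different choices could yield genuinely different Treaps, so the hypothesis that $\pri$ assigns distinct values is essential and should be invoked explicitly at the step where $x^*$ is selected. Everything else is a routine structural induction, so I do not anticipate any substantive obstacle.
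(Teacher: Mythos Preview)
Your proposal is correct and mirrors the paper's own reasoning: the paper sketches exactly this argument in the paragraph preceding the observation (root is the unique argmax when priorities are distinct, BST order forces the left/right partition, recurse), and states the observation without a separate formal proof. Your inductive write-up simply fleshes out that sketch, so there is nothing to add or correct.
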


\vspace{-0.05in}

From now on, we always assume that $\pri$ has distinct values.
Therefore, when $\pri$ is defined from the context, the term \emph{Treap} refers to the unique BST $T$.
For each node $x \in [n]$, we use $\depth(x)$ to denote its depth in $T$, i.e., the number of vertices on the path from the root to $x$.

Given any two items $x, y \in [n]$, one can determine whether $x$ is an ancestor of $y$ in a Treap without traversing the tree.
\begin{obs}
\label{obs:treapAncIntervalMax}
Given any $x, y \in [n]$, $x$ is an ancestor of $y$ if and only if $\pri_x = \max_{z \in [x, y]} \pri_z$.
\end{obs}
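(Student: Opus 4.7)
The plan is to reduce the observation to two standard structural facts about Treaps that are immediate from the recursive construction given just above the statement. Without loss of generality I would assume $x \le y$, so the interval $[x,y] = \{x, x+1, \ldots, y\}$, and let $w$ denote the lowest common ancestor of $x$ and $y$ in $T$.

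First I would establish the BST fact that $w \in [x,y]$ and that the subtree rooted at $w$ contains every key in $[x,y]$. This is the standard observation that the root-to-$x$ and root-to-$y$ paths coincide until the first node whose key separates them, and a separating key must lie weakly between $x$ and $y$; moreover, because a BST subtree occupies a contiguous key interval, the subtree at $w$ must contain all of $[x,y]$. Thus $x$ being an ancestor of $y$ is equivalent to $w = x$.

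Next I would invoke the Treap/heap property, which says that every ancestor carries a strictly larger priority than each of its descendants (using that priorities are distinct, as stated right after the definition). Applied to $w$ and its subtree, this yields $\pri_w > \pri_z$ for every $z \in [x,y] \setminus \{w\}$, so $\pri_w = \max_{z \in [x,y]} \pri_z$. Combining the two reductions, $x$ is an ancestor of $y$ iff $x = w$ iff $\pri_x = \max_{z \in [x,y]} \pri_z$, completing both directions.

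Neither direction should present a real obstacle, since both facts are already implicit in the recursive Treap construction described in the paragraph preceding the observation; the only care needed is to justify the BST subtree-interval property cleanly, which can be done in one line by induction on the root-to-$w$ path.
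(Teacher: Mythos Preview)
The paper does not supply a proof for this observation; it is stated without justification and then used in the subsequent lemmas. Your argument via the lowest common ancestor is correct and is the standard way to establish this fact: showing that the LCA $w$ of $x$ and $y$ lies in $[x,y]$, that its subtree contains all of $[x,y]$, and then invoking the heap property with distinct priorities to conclude $w$ is exactly the interval maximizer, so $x$ is an ancestor of $y$ iff $x=w$. There is nothing in the paper to compare against, and your proof fills the gap cleanly.
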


Classical results from \citet{AS89:treap} state that if the priorities are randomly assigned, the depth of the Treap cannot be too large. Also, Treaps can be made dynamic and support operations such as insertions and deletions. 

\begin{restatable}[\cite{AS89:treap}]{lemma}{treapas}
\label{lem:treapLg}
Let $U(0, 1)$ be the uniform distribution over the real interval $[0, 1].$
If $\pri \sim U(0, 1)^n$, each Treap node $x$ has depth $\Theta(\log_2 n)$ with high probability.
\end{restatable}
\vspace{-0.05in}

\begin{lemma}[\cite{AS89:treap}]
\label{lem:treapOps}
Given a Treap $T$ and some item $x \in [n]$, $x$ can be inserted to or deleted from $T$ in $O(\depth(x))$-time.
\end{lemma}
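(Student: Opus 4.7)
The plan is to handle insertions and deletions separately, via the classical rotate-up and rotate-down procedures of Aragon and Seidel~\cite{AS89:treap}, and to argue that all work in each case happens along a single root-to-leaf path whose length is $O(\depth(x))$.

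For insertion of $x$ with priority $\pri_x$: first I would perform a standard top-down BST insertion, walking down along the search path for $x$'s key and attaching $x$ as a leaf at some depth $d_{\mathrm{leaf}}$. The tree now satisfies the BST property but may violate the heap priority invariant along the newly formed root-to-$x$ path. To restore it, I would rotate $x$ upward against its parent (a single left or right rotation, depending on whether $x$ is a left or right child) for as long as $\pri_x$ exceeds the parent's priority. A routine local check shows that single rotations preserve the BST ordering, and by the uniqueness of the Treap corresponding to $\pri$ (\cref{obs:treapAncIntervalMax}), the procedure halts exactly when $x$ reaches its correct position. Each rotation is $O(1)$ and moves $x$ up by exactly one level, so the rotation phase performs $d_{\mathrm{leaf}} - \depth(x)$ rotations, and every node touched lies either on the final root-to-$x$ path or among the two off-path fragments that get swept into $x$'s subtree. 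Summing, the total cost is $O(\depth(x))$.

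For deletion of $x$: first locate $x$ via BST search in $O(\depth(x))$ time. Then repeatedly rotate $x$ downward with the higher-priority of its two children (or with its unique child, if only one child exists) until $x$ becomes a leaf, at which point $x$ is detached. Always rotating with the higher-priority child preserves the BST ordering and keeps the heap invariant intact among the remaining nodes; when $x$ has no children the procedure terminates. The number of downward rotations equals the length of the path $x$ traverses through what was originally its own subtree, and by the same path-accounting argument as above, the combined search plus rotation cost is bounded by $O(\depth(x))$.

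The main subtlety is to verify that single rotations (i) preserve BST order locally and (ii) that the rotate-up or rotate-down loops terminate at the unique Treap-compatible configuration; both are standard consequences of the uniqueness statement in \cref{obs:treapAncIntervalMax} together with the local nature of rotations. Once these two invariants are in hand, the time bound follows by pairing each $O(1)$ rotation with one unit of motion of $x$ along the relevant root-to-leaf path.
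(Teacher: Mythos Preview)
The paper does not prove this lemma; it is stated with a citation to Aragon--Seidel and used as a black box, so there is no paper proof to compare against. Your rotate-up / rotate-down description is the standard mechanism and is correct as an algorithm.

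There is, however, a genuine gap in your cost accounting. For insertion you correctly observe that the BST descent lands $x$ as a leaf at some depth $d_{\mathrm{leaf}}$ and that the rotate-up phase does $d_{\mathrm{leaf}}-\depth(x)$ rotations, so the total work is $\Theta(d_{\mathrm{leaf}})$. You then conclude this is $O(\depth(x))$, but $d_{\mathrm{leaf}}$ is \emph{not} bounded by the final depth $\depth(x)$: their difference equals the combined length of the left and right spines of $x$'s eventual subtree, which can be arbitrarily large relative to $\depth(x)$. For instance, if $x$ carries the globally largest priority it becomes the root, so $\depth(x)=0$, while $d_{\mathrm{leaf}}$ can still be $\Theta(n)$ in a path-shaped treap on $[n]\setminus\{x\}$. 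The deletion argument has the symmetric problem: the search costs $\depth(x)$, but the number of downward rotations again equals those spine lengths and is not controlled by $\depth(x)$. The correct deterministic bound for both operations is $O(d_{\mathrm{leaf}})$, the depth $x$ attains as a leaf. The paper's statement is loose on exactly this point---harmless in its applications, since all relevant depths there are governed by the same $O(\log(1/w_x))$ estimate---but your write-up should either state the bound as $O(d_{\mathrm{leaf}})$ or flag the imprecision, rather than claim the jump from $\Theta(d_{\mathrm{leaf}})$ to $O(\depth(x))$.
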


\subsection{Learning-Augmented Treaps}
\label{sec:scoretreap}

\vspace{-0.05in}
In this section, we present the construction of composite priorities and prove the following theorem.
\begin{theorem}[Learning-Augmented Treap via Composite Priorities]\label{thm:compPriTreap}
Denote $\bw = (w_{1},\cdots,w_{n}) \in (0, 1)^n$
as a score associated with each item in $[n]$ such that $\|\bw\|_1 = O(1)$.
Consider the following priority assignment of each item:
\begin{align}
\label{eq:priority_bst}
    \pri_x \defeq -\left\lfloor \log_2 \log_2 \frac{1}{w_x} \right\rfloor + \delta_x,
\end{align}
where $\delta_x$ is drawn independently uniformly from $(0, 1)$.
The expected depth of any item $x \in [n]$ is $O(\log_2 (1/w_x))$.
\end{theorem}

\vspace{-0.1in}



\paragraph{Proof Plan.} Note that the priority in \cref{eq:priority_bst} consists of two terms. We define $x$'s tier as $\tau_x:=\left\lfloor \log_2 \log_2 \frac{1}{w_x} \right\rfloor = -\lfloor \pri_x \rfloor.$ Let $S_t = \Set{x \in [n]}{\tau_x = t}$ be the number of items whose tiers are equal $t$. 
We assume wlog that $\tau_x \ge 0$ for any $x$.
Otherwise, $\tau_x < 0$ implies $w_x = \Omega(1)$, which can hold for only a constant number of items. So, we can always put them at the top of the Treap, which increases the depths of other items by a constant. 

The expected depth of $x$ is the number of its ancestors. We show in \cref{lemma:scoreBucketSize} that the number of items at tier $t$ is bounded by $|S_t|=2^{O(2^t)}$. Furthermore, for each tier, the ties are broken randomly due to the random offset $\delta_x \sim U(0, 1)$. Then, as we show in \cref{lemma:scoreDepthPerBucket}, any item has $O(\log_2 |S_t|)=O(2^t)$ ancestors with tier $t$ in expectation. Therefore, the expected depth $\E[\depth(x)]$ can be bound by $O(2^0 + 2^1 + \ldots + 2^{\tau_x}) = O(2^{\tau_x}) = O(\log_2(1/w_x)).$

\begin{lemma}
\label{lemma:scoreBucketSize}
For any integer $t \ge 0$, $\Abs{S_t} = 2^{O(2^{t})}.$
\end{lemma}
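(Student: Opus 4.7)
The plan is to unwind the definition of $\tau_x$ to get a pointwise lower bound on $w_x$ for every $x \in S_t$, and then combine that lower bound with the normalization $\|\bw\|_1 = O(1)$ to bound $|S_t|.$

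First, I would unpack the tier condition. By the definition \eqref{eq:itemWgt}, $x \in S_t$ means $\lfloor \log_2 \log_2 (1/w_x) \rfloor = t$, which gives
\[
t \;\le\; \log_2 \log_2 \frac{1}{w_x} \;<\; t+1.
\]
Exponentiating twice, this is equivalent to $2^{2^{t}} \le 1/w_x < 2^{2^{t+1}}$, so in particular every item in $S_t$ satisfies the lower bound $w_x > 2^{-2^{t+1}}$.

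Next, I would sum this lower bound over $S_t$ and invoke the assumption $\|\bw\|_1 = O(1)$:
\[
|S_t| \cdot 2^{-2^{t+1}} \;<\; \sum_{x \in S_t} w_x \;\le\; \|\bw\|_1 \;=\; O(1),
\]
from which $|S_t| = O(2^{2^{t+1}}) = O(2^{O(2^t)})$, as claimed.

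There is essentially no obstacle: the lemma is a direct two-line calculation once the nested logarithms in $\tau_x$ are unwound. The only thing to be mildly careful about is that the $t=0$ case still works (there the lower bound reads $w_x > 2^{-2} = 1/4$, consistent with $\|\bw\|_1 = O(1)$), and that the assumption $\tau_x \ge 0$ made in the paragraph preceding the lemma is what lets us restrict to non-negative $t$ without losing anything.
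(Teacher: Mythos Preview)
Your proposal is correct and follows exactly the same approach as the paper: unwind the floor/double-log to get $w_x > 2^{-2^{t+1}}$ for $x\in S_t$, then divide $\|\bw\|_1=O(1)$ by this lower bound. If anything, you spell out the summation step more explicitly than the paper's one-line ``there are only $O(\poly(2^{2^{t+1}}))$ such items,'' but the argument is identical.
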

\vspace{-0.12in}
\begin{proof}
Observe that $x \in S_t$ if and only if
\vspace{-0.05in}
\begin{align*}
    t \le \log_2 \log_2 (1/w_x) < t+1, \text{ and }
    2^{2^t} \le \frac{1}{w_x} < 2^{2^{t + 1}}.
\end{align*}
Since the total score $\|\bw\|_1 = O(1)$, there are only $\poly(2^{2^{t+1}}) = 2^{O(2^{t})}$ such items.
\end{proof}
\vspace{-0.15in}
Next, we bound the expected number of ancestors of item $x$ in every $S_t$ such that $t \le \tau_x.$
\begin{lemma}
\label{lemma:scoreDepthPerBucket}
Let $x \in [n]$ be any item and $t \le \tau_x$ be a non-negative integer.
The expected number of ancestors of $x$ in $S_t$ is at most $O(\log_2 |S_t|).$
\end{lemma}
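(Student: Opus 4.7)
The plan is to identify which $y \in S_t$ can possibly be ancestors of $x$ and to bound the sum of their individual ancestry probabilities. By \Cref{obs:treapAncIntervalMax}, $y$ is an ancestor of $x$ iff $\pri_y = \max_{z \in [x,y]} \pri_z$. Since every priority has the form $-\tau_z + \delta_z$ with $\delta_z \in (0,1)$, any item of tier strictly less than $t$ deterministically outranks every tier-$\ge t$ item. Consequently, a candidate $y \in S_t$ can be an ancestor of $x$ only when the index interval $[x, y]$ contains no element of tier $< t$; conditional on this, the event reduces to $\delta_y$ being the largest among the $\delta$-values of the tier-$t$ items inside $[x, y]$.

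Next I would split the count by the side of $x$. Enumerate the elements of $S_{\le t}$ to the right of $x$ in increasing index order as $b_1 < b_2 < \cdots$, and symmetrically $a_1 > a_2 > \cdots$ to the left. For $y = b_i \in S_t$, the observation above forces $b_1, \ldots, b_{i-1}$ to already lie in $S_t$; then the collection of tier-$t$ items in $[x, b_i]$ has size $i$, or $i+1$ if $\tau_x = t$ (so that $x$ itself counts). By symmetry of the uniform $\delta$-values, $\delta_{b_i}$ is the maximum in either case with probability at most $1/i$. Let $R$ be the largest index with $b_1, \ldots, b_R \in S_t$; then $R \le |S_t|$, and summing the right-side contributions yields $\sum_{i=1}^R 1/i = O(\log R) = O(\log_2 |S_t|)$. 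An identical argument on the left side gives the same bound, so linearity of expectation finishes the proof.

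\textbf{Main obstacle.} The delicate step is the symmetry claim that among $i$ tier-$t$ items, each is equally likely to hold the maximum $\delta$, so the probability equals $1/i$. This is immediate under full independence and holds exactly whenever the $\delta$'s involved are $i$-wise independent; since \Cref{thm:compPriTreap} provides only $11$-wise independence, some care is required to justify the $1/i$ upper bound (at least up to constants) when $i > 11$. The natural patch is to observe that a standard $11$-wise independent hash family is symmetric in its input labels, which is all the symmetry argument actually consumes for the maximum event. The rest of the plan is just enumeration, interval reasoning, and a harmonic-sum bookkeeping that appear routine once the priority structure is unpacked.
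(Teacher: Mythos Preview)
Your plan is correct and reaches the same $O(\log_2|S_t|)$ harmonic bound as the paper, via the same mechanism: a tier-$t$ item $y$ can be an ancestor of $x$ only if its $\delta$-value is maximal among the tier-$t$ items in $[x,y]$, and summing the resulting reciprocal-of-count probabilities gives a harmonic series. The paper streamlines your enumeration: it simply notes that ``$\pri_y$ is the maximum over $S_t\cap[x,y]$'' is a \emph{necessary} condition for ancestry, so $\Pr(y\text{ ancestor of }x)\le 1/|S_t\cap[x,y]|$, and then observes that each value $u=|S_t\cap[x,y]|$ is attained by at most two choices of $y$ (one on each side of $x$). Your detour through $S_{\le t}$ and the tier-$<t$ ``blockers'' is sound but unnecessary, since the necessary condition already subsumes those cases. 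On the $11$-wise-independence concern you raise: the paper simply writes ``i.i.d.'' at this step and does not engage with it, so you are being more careful here than the paper itself; the symmetry-of-labels patch you sketch is the standard resolution.
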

\vspace{-0.15in}
\begin{proof}
First, we show that any $y \in S_t$ is an ancestor of $x$ with probability no more than $1 / |S_t \cap [x, y]|.$
\Cref{obs:treapAncIntervalMax} says that $y$ must have the largest priority among items $[x, y].$
Thus, a necessary condition for $y$ being $x$'s ancestor is that $y$ has the largest priority among items in $S_{t} \cap [x, y].$
However, priorities of items in $S_t \cap [x, y]$ are i.i.d. random variables of the form $-t + U(0, 1).$
Thus, the probability that $\pri_y$ is the largest among them is $1 / |S_t \cap [x, y]|.$

To bound the expected number of ancestors of $x$ in $S_t$,
\vspace{-0.1in}
\begin{align*}
&\E[\text{number of ancestors of $x$ in $S_t$}]\\
&= \sum_{y \in S_t} \Pr\left(\text{$y$ is an ancestor of $x$}\right) \\
&\le \sum_{y \in S_t} \frac{1}{\Abs{S_t \cap [x, y]}} 
\le 2 \cdot \sum_{u = 1}^{|S_t|} \frac{1}{u} = O(\log_2 |S_t|),
\end{align*}
where the second inequality comes from the fact that for a fixed value of $u$, there are at most two items $y \in S_t$ with $|S_t \cap [x, y]| = u$ (one with $y \le x$, the other with $y > x$).
\end{proof}
\vspace{-0.1in}
Now we are ready to prove \Cref{thm:compPriTreap}.
\vspace{-0.1in}
\begin{proof}[Proof of \Cref{thm:compPriTreap}]
By \Cref{lemma:scoreDepthPerBucket} and \Cref{lemma:scoreBucketSize}, the expected depth of $x$ can be bounded by
\vspace{-0.05in}
\begin{align*}
&\E[\depth(x)]
= \sum_{t = 0}^{\tau_x} \E[\text{number of ancestors of $x$ in $S_t$}] \\ 
&\leq
O\left( \sum_{t = 0}^{\tau_x} \log_2 \Abs{S_t} \right)
\leq
O\left( \sum_{t = 0}^{\tau_x} 2^t \right)
 \leq
O\left( 2^{\tau_x} \right).
\end{align*}
We conclude the proof by observing that
\begin{align*}
    \tau_x \le \log_2 \log_2 \frac{1}{w_x} \le \tau_x + 1\text{ and } 2^{\tau_x} \le \log_2 \frac{1}{w_x}. \tag*{\qedhere}
\end{align*}
\end{proof}
\vspace{-0.1in}
Moreover, our proposed learning-augmented treap supports efficient updates, where we can use rotations to do insertions, deletions, and weight changes. The following corollary follows naturally by \cref{thm:compPriTreap} and \cref{lem:treapOps}.

\vspace{-0.02in}
\begin{corollary}
\label{coro:compPriTreapDyn}
The data structure supports insertions and deletions naturally.
Suppose the score of some node $x$ changes from $w$ to $w'$ and a pointer to the node is given, the Treap can be maintained with $O(|\log_2(w'/w)|)$ rotations in expectation.
\end{corollary}

\vspace{-0.1in}

\subsection{Static Optimality}
\label{sec:staticopt}

We present a priority assignment for constructing statically optimal Treaps given item frequencies.
Given any access sequence $\mX = (x(1), \ldots, x(m))$, we define $f_x$ for any item $x$, to be its \emph{frequency} in $\mX$, i.e. $f_x \coloneqq |\Set{i \in [m]}{x(i) = x}|, x \in [n].$
For simplicity, we assume that every item is accessed at least once, i.e., $f_x \ge 1, x \in [n].$
We prove the following result as a simple application of \Cref{thm:compPriTreap} by setting $w_x \coloneqq \frac{f_x}{m},~ x \in [n].$

\begin{restatable}{theorem}{staticopt}
    \label{thm:lglgStaOpt}
    For any item $x \in [n]$, we set its priority as
    \begin{align*}
        \pri_x \coloneqq - \left\lfloor \log_2 \log_2 \frac{m}{f_x} \right\rfloor + \delta_x, \delta_x \sim U(0, 1).
    \end{align*}
    In the corresponding Treap, each node $x$ has expected depth $O(\log_2(m / f_x)).$
    Therefore, the total time for processing the access sequence is $O(\sum_x f_x \log_2(m / f_x))$, which matches the performance of the optimal static BSTs up to a constant factor.
\end{restatable}

\vspace{-0.15in}

\subsection{Robustness Guarantees}
\label{sec:staOptRobust}
In practice, one could only estimate $q_x \approx p_x = f_x / m, x \in [n].$
A natural question arises: how does the estimation error affect the performance?
In this section, we analyze the drawbacks in performance given the estimation errors.
As a result, we will show that our Learning-Augmented Treaps are robust against noise and errors.

For each item $x \in [n]$, define $p_x = f_x / m$ to be the relative frequency of item $x.$
One can view $\bp$ as a probability distribution over $[n]$ such that $\bp(x)=p_x$.
Then we can restate the expected depth of each item in \cref{thm:lglgStaOpt} using the notion of entropy. We define the entropy as follows and state the corollary in \cref{coro:staOptEnt}.

\begin{definition}[Entropy]
Given a probability distribution $\bp$ over $[n]$, define its \emph{Entropy} as $\Ent(\bp) \coloneqq \sum_x p_x \log_2 (1 / p_x) = \E_{x \sim \bp}[\log_2(1 / p_x)].$
\end{definition}
\begin{corollary}
\label{coro:staOptEnt}
In \Cref{thm:lglgStaOpt}, the expected depth of each item $x$ is $O(\log_2(1/p_x))$ and the expected total cost is $O(m \cdot \Ent(\bp))$, where $\Ent(\bp) = \sum_x p_x \log_2(1 / p_x)$ measures the entropy of the distribution $\bp.$
\end{corollary}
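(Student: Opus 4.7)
The plan is to derive this corollary directly from \Cref{thm:lglgStaOpt} by a simple change of variables, since the priority assignment used there is already defined in terms of $m / f_x$, which equals $1 / p_x$ under the substitution $p_x = f_x/m$.

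First, I would translate the per-item depth bound. \Cref{thm:lglgStaOpt} gives that every item $x$ has expected depth $O(\log_2(m / f_x))$ in the resulting Treap. Substituting $p_x = f_x / m$ yields the expected depth bound
\[
\E[\depth(x)] = O\!\left(\log_2 \frac{1}{p_x}\right),
\]
which is the first claim of the corollary.

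Next, I would bound the total expected cost of processing the access sequence $\mX$. Since each access to item $x$ costs $O(\depth(x))$ and item $x$ is accessed $f_x = m p_x$ times, linearity of expectation gives
\[
\E[\text{total cost}] = \sum_{x \in [n]} f_x \cdot \E[\depth(x)] = O\!\left(\sum_{x \in [n]} m p_x \log_2 \frac{1}{p_x}\right) = O(m \cdot \Ent(\bp)),
\]
by the definition $\Ent(\bp) = \sum_x p_x \log_2 (1/p_x)$.

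There is no real obstacle here: the corollary is essentially a restatement of \Cref{thm:lglgStaOpt} that isolates the entropy interpretation, and the only step is the algebraic identity $\log_2(m / f_x) = \log_2(1/p_x)$ combined with the fact that weighting the per-item depth bounds by access counts reproduces the expected value $\E_{x \sim \bp}[\log_2(1/p_x)]$ up to the factor $m$. The usefulness of this form is that it prepares the ground for the robustness discussion in \Cref{sec:staOptRobust}, where plugging estimates $q_x$ in place of the true $p_x$ will produce the cross-entropy, and hence KL-divergence, overhead described in the overview.
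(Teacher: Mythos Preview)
Your proposal is correct and matches the paper's intent exactly: the paper presents this corollary as an immediate restatement of \Cref{thm:lglgStaOpt} under the substitution $p_x = f_x/m$, without writing out a separate proof. Your derivation --- the change of variables for the per-item depth and the linearity-of-expectation computation for the total cost --- is precisely the content being asserted.
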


\vspace{-0.1in}
Now we consider the case when we cannot access the relative frequency $p_x$. Instead, we are given $p_x$'s estimator, $q_x$, and construct the data-augmented BST with $q_x$. Similarly, we view $\bq$ as a data distribution over $[n]$ such that $\bq(x)=q_x$. Then we show that the total access of the treap built with $q_x$ equals the total access number $m$ times the cross entropy of $\bp$ and $\bq$ in \cref{thm:noisyFreq}. We start with some definitions.

\begin{definition}[Cross Entropy]
Given two distributions $\bp, \bq$ over $[n]$, define its \emph{Cross Entropy} as $\Ent(\bp, \bq) \coloneqq \sum_x p_x \log_2 (1 / q_x) = \E_{x \sim \bp}[\log_2(1 / q_x)].$
\end{definition}
\begin{definition}[KL Divergence]
Given two distributions $\bp, \bq$ over $[n]$, define its \emph{KL Divergence} as $\KL(\bp, \bq) = \Ent(\bp, \bq) - \Ent(\bp) = \sum_x p_x \log_2(p_x / q_x).$
\end{definition}

\vspace{-0.05in}
We analyze the run time given frequency estimations $\bq.$
\begin{theorem}
\label{thm:noisyFreq}
Given a distribution $\bq$, an estimate of the
true relative frequencies distribution $\bp$.
For any item $x \in [n]$, we draw a random number $\delta_x \sim U(0, 1)$ and set its priority as
\vspace{-0.1in}
\begin{align*}
    \pri_x \coloneqq - \left\lfloor \log_2 \log_2 \frac{1}{q_x} \right\rfloor + \delta_x.
\end{align*}
\vspace{-0.2in}

In the corresponding Treap, each node $x$ has expected depth $O(\log_2(1 / q_x)).$
Therefore, the total time for processing the access sequence is $O(m \cdot \Ent(\bp, \bq))$.
\end{theorem}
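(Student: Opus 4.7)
The plan is to derive \Cref{thm:noisyFreq} as a direct corollary of \Cref{thm:compPriTreap} by instantiating the abstract weight vector $\bw$ with the estimated distribution $\bq$, then summing the per-item depth bound against the true frequencies $\bf$ to recover the cross-entropy expression.

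First I would verify the hypothesis of \Cref{thm:compPriTreap}: since $\bq$ is a probability distribution over $[n]$, we have $\|\bq\|_1 = 1 = O(1)$, so the priority scheme
\[
\pri_x = -\left\lfloor \log_2 \log_2 \frac{1}{q_x} \right\rfloor + \delta_x, \quad \delta_x \sim U(0,1),
\]
falls exactly under the composite-priority framework with $w_x \defeq q_x$. Invoking \Cref{thm:compPriTreap} immediately yields $\E[\depth(x)] = O(\log_2(1/q_x))$ for every $x \in [n]$. Note that this step uses no property of the true frequencies $\bp$ whatsoever; the depth analysis is a statement about whatever weights are plugged into the priority function.

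Next I would bound the total access cost. Since the access sequence $\mX$ contains item $x$ exactly $f_x = m \cdot p_x$ times, and each access to $x$ costs $\depth(x) + O(1)$ memory probes, linearity of expectation gives
\[
\E\!\left[\sum_{i=1}^m \depth(x(i))\right]
= \sum_{x \in [n]} f_x \cdot \E[\depth(x)]
\lesssim \sum_{x \in [n]} f_x \log_2 \frac{1}{q_x}
= m \sum_{x \in [n]} p_x \log_2 \frac{1}{q_x}
= m \cdot \Ent(\bp, \bq),
\]
which is the desired bound.

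There is essentially no obstacle beyond one mild edge case: items with $q_x$ so small that $\log_2\log_2(1/q_x)$ is enormous (or items assigned $q_x = 0$). For the former, the per-item depth bound of $O(\log_2(1/q_x))$ absorbed into the cross-entropy remains valid as long as $q_x > 0$. For the latter, one can stipulate (as is standard in KL-type analyses) that $\bq$ has full support, or alternatively floor each $q_x$ by some negligible value; this is a modeling assumption rather than a proof difficulty. Once this is handled, the theorem follows essentially by reading off \Cref{thm:compPriTreap} and taking an expectation over the access sequence.
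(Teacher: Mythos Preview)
Your proposal is correct and follows essentially the same approach as the paper: instantiate \Cref{thm:compPriTreap} with $w_x = q_x$ (using $\|\bq\|_1 = 1$) to obtain the per-item depth bound, then sum $f_x \cdot \E[\depth(x)]$ and rewrite $f_x = m p_x$ to recognize the cross entropy. Your handling of the $q_x = 0$ edge case is a reasonable addition that the paper leaves implicit.
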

\vspace{-0.15in}
\begin{proof}
Define the weights $w_x = q_x$ for each item $x \in [n].$
Clearly, $\|\bw\|_1 = 1$ and we can apply \Cref{thm:compPriTreap} to prove the bound on the expected depths.
The total time for processing the access sequence is, by definition,
\begin{align*}
    O\left(\sum_{x \in [n]} f_x \log_2 \frac{1}{q_x} \right) &= O\left(m \cdot \sum_{x \in [n]} p_x \log_2 \frac{1}{q_x} \right) \\
    &= O\left(m \cdot \Ent(\bp, \bq) \right). \tag*{\qedhere}
\end{align*}
\end{proof}


\subsection{Analysis of Other Priority Assignments}
\label{sec:compare}

In this section, we discuss two different priority assignments.
For each assignment, we design an input distribution that results in a greater expected depth than the expected depth with our priority assignment stated in \Cref{thm:lglgStaOpt}. 
We define the distribution $\bp$ as $\bp(x)=p_x=f_x / m, x \in [n]$. We use $f \gtrsim g$ to indicate that $f$ is greater or equal to $g$ up to a constant factor.

The first priority assignment is used in \citet{LLW22}.
They assign priorities according to $p_x$ entirely, i.e., $\pri_x = p_x, x \in [n].$
Assuming that items are ordered randomly, and $\bp$ is a Zipfian distribution, they show \emph{Static Optimality}.
However, it does not generally hold--there exists a distribution $p$ where the expected access cost for \cite{LLW22} is $\Omega(n)$, while our data structure (\cref{thm:compPriTreap}) achieves only a $O(\log_2 n)$ cost.

\begin{theorem}
\label{thm:p_pri_bad}
Consider the priority assignment that assigns the priority of each item to be $\pri_x \coloneqq p_x, x \in [n].$
There is a distribution $\bp$ over $[n]$ such that the expected access time, $\E_{x \sim \bp}[\depth(x)] = \Omega(n).$
\end{theorem}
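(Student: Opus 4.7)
The plan is to exhibit a concrete distribution $\bp$ for which the priority-equals-probability rule produces a completely degenerate Treap (a right-going path), and then show that the bulk of the probability mass sits at large depth in this path.

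First I would set up the distribution. Take keys $1, 2, \ldots, n$ and let
\[
p_x \defeq \frac{2(n - x + 1)}{n(n+1)}, \qquad x \in [n],
\]
so that $\bp$ is a valid distribution with $p_1 > p_2 > \cdots > p_n$ strictly decreasing. Since $\pri_x = p_x$, the priority sequence is also strictly decreasing in the key order. By \Cref{obs:treapAncIntervalMax}, item $1$ has the maximum priority among any interval $[1, y]$, so it must be the root. Recursing into the right subtree $\{2, \ldots, n\}$, item $2$ becomes its root; continuing inductively, the unique Treap is the right-going path in which item $x$ sits at depth exactly $x - 1$, i.e.\ $\depth(x) = x - 1$.

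Next I would compute the expected access cost directly:
\[
\E_{x \sim \bp}[\depth(x)]
= \sum_{x=1}^{n} p_x (x-1)
= \frac{2}{n(n+1)} \sum_{x=1}^{n} (n - x + 1)(x - 1).
\]
Substituting $j = x - 1$, the sum becomes $\sum_{j=0}^{n-1} (n-j)\, j = n \cdot \tfrac{(n-1)n}{2} - \tfrac{(n-1)n(2n-1)}{6} = \Theta(n^3)$. Dividing by $n(n+1) = \Theta(n^2)$ yields $\E_{x \sim \bp}[\depth(x)] = \Omega(n)$, as required.

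There is no serious obstacle here; the only thing to get right is the direction of the priority ordering (that strictly decreasing priorities in key order force a path of depth $n-1$, not a balanced tree), and the choice of $\bp$ so that the total probability mass on the deep portion of this path is $\Theta(1)$ rather than vanishing. The linear choice of $p_x$ above accomplishes both: roughly half the mass lies on keys with index $\ge n/2$, each of which has depth $\ge n/2 - 1$, so the expected depth is forced to grow linearly in $n$ even though no individual key has probability larger than $2/(n+1)$.
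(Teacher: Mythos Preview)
Your proposal is correct and essentially identical to the paper's own proof: both use the distribution $p_x = \tfrac{2(n-x+1)}{n(n+1)}$, observe that the strictly decreasing priorities force the Treap to be a single path with $\depth(x)$ equal to (roughly) $x$, and then compute $\sum_x p_x \cdot x = \Theta(n)$ directly. The only cosmetic difference is that the paper takes $\depth(x)=x$ while you take $\depth(x)=x-1$, which of course does not affect the $\Omega(n)$ conclusion.
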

\vspace{-0.2in}
\begin{proof}
We define for each item $x$, $p_x \coloneqq \frac{2(n-x+1)}{n(n+1)}.$
One could easily verify that $\bp$ is a distribution over $[n].$
In addition, the smaller the item $x$, the larger the priority $\pri_x.$
Thus, by the definition of Treaps, item $x$ has depth $x.$
The expected access time of $x$ sampled from $\bp$ can be lower bounded as follows:
\begin{align*}
&\E_{x \sim \bp}[\depth(x)]
=\sum_{x\in [n]}p_x \cdot \depth(x) \\
&=\sum_{x\in [n]}\frac{2(n-x+1)}{n(n+1)} \cdot x 
=\frac{2}{n(n+1)}\sum_{x\in[n]}x(n-x+1) \\
&\gtrsim \frac{2}{n(n+1)} \cdot n^3 \gtrsim n.
\tag*{\qedhere}
\end{align*}
\end{proof}

Next, we consider the priority assignment $\pri_x \coloneqq -\lfloor \log_2 1/p_x \rfloor + \delta_x, \delta_x \sim U(0, 1)$.
\begin{theorem}
\label{thm:lg_pri_bad}
Consider the following priority assignment that sets the priority of each node $x$ as $\pri_x \coloneqq -\lfloor \log_2 1/p_x \rfloor + \delta_x, \delta_x \sim U(0, 1)$.
There is a distribution $\bp$ over $[n]$ such that the expected access time, $\E_{x \sim \bp}[\depth(x)] = \Omega(\log_2^2 n).$
\end{theorem}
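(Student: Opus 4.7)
The plan is to exhibit an explicit distribution that witnesses the lower bound, and the natural candidate is a Zipfian distribution $p_x \propto 1/x$ on $[n]$. With this choice one has $\tau_x = \lfloor \log_2(1/p_x) \rfloor = \lfloor \log_2 x \rfloor + O(\log\log n)$, so up to a harmless additive shift the tier structure partitions $[n]$ into $T \defeq \lfloor \log_2 n \rfloor$ consecutive blocks $S_t = \{2^t, 2^t+1, \ldots, 2^{t+1}-1\}$ of sizes $|S_t| = \Theta(2^t)$. The crucial feature of this arrangement is that every item of tier strictly less than $t$ lies at a position strictly smaller than every item of tier $t$. This is exactly what will let me upgrade the necessary condition used in the upper-bound proof of \Cref{lemma:scoreDepthPerBucket} to a \emph{sufficient} one, which is what a matching lower bound requires.

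First I would fix an item $x$ with $T_x = \tau_x$ close to $T$ and estimate $\E[\depth(x)]$ from below by decomposing its ancestors according to tier. For a candidate ancestor $y \in S_t$ with $t < T_x$, assume without loss of generality that $y < x$; then $y \le 2^{t+1}-1 \le 2^{T_x} \le x$, and the closed interval $[y,x]$ contains no item of tier strictly less than $t$ (every such item lies at a position $< 2^t \le y$). Items of tier $> t$ that lie in $[y,x]$ are automatically dominated by $y$ in priority, since their integer part is $<-t$ while the $\delta$ offsets lie in $(0,1)$. Therefore $y$ is an ancestor of $x$ iff $\delta_y = \max_{z\in S_t\cap[y,x]}\delta_z$, an event of probability exactly $1/|S_t\cap[y,x]|$.

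Next I would sum these probabilities. Because the elements of $S_t$ occupy consecutive positions, as $y$ ranges over $S_t$ on the side of $x$, the values of $|S_t\cap[y,x]|$ sweep through $1, 2, \ldots, |S_t|$, producing the harmonic sum $H_{|S_t|} = \Theta(\log|S_t|) = \Theta(t)$. Summing over tiers $t \le T_x$ yields $\E[\depth(x)] = \Omega(T_x^2) = \Omega(\log_2^2 x)$. Taking expectation over $x\sim\bp$ reduces to a standard integral estimate,
\[
\E_{x\sim\bp}[\depth(x)] \gtrsim \frac{1}{H_n}\sum_{x=1}^n \frac{(\log_2 x)^2}{x} = \Theta\!\left(\frac{(\log_2 n)^3}{\ln n}\right) = \Omega(\log_2^2 n),
\]
which is the bound claimed in \Cref{thm:lg_pri_bad}.

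The main obstacle is exactly the converse direction to \Cref{lemma:scoreDepthPerBucket}: the upper bound there only needed the necessary condition ``$\pri_y$ is the max within $S_t \cap [x,y]$,'' whereas a lower bound must argue that this is also sufficient, i.e., no item of a strictly smaller tier sneaks into $[y,x]$ to dethrone $y$. The contiguous-block structure of the Zipfian construction is what makes this work cleanly and simultaneously enables the harmonic-sum lower bound within each tier. Any alternative distribution would have to reproduce both features, so the choice of Zipfian (rather than a more adversarially engineered family) is essentially forced by these two compatibility requirements.
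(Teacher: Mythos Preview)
Your argument is correct and arrives at the same $\Omega(\log_2^2 n)$ bound, but via a different construction than the paper. The paper builds a bespoke distribution: it takes $K=\tfrac{1}{2}\log_2 n$ contiguous segments $S_1,\dots,S_K$, each of size at least $n^{1/3}$, assigns all elements of $S_i$ the same probability $2^{i-1}/n$, and then argues structurally that the root-to-$x$ path for $x\in S_i$ must traverse the leftmost spine of each of the $S_{i+1},\dots,S_K$ sub-treaps, picking up $\Omega(\log n)$ nodes from each and hence $\Omega((K-i)\log n)$ in total. You instead use the Zipfian distribution and reason locally with exact ancestor probabilities, turning the necessary condition of \Cref{lemma:scoreDepthPerBucket} into a sufficient one thanks to the contiguous, ordered tier blocks; the per-tier contribution is then a genuine harmonic sum $H_{|S_t|}=\Theta(t)$, and summing over tiers gives $\Theta(T_x^2)$. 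Both proofs hinge on the same geometric fact---tiers form consecutive intervals so that lower-index tiers cannot intrude into $[y,x]$---but the paper exploits it through the spine structure while you exploit it through the ancestor characterization of Observation~\ref{obs:treapAncIntervalMax}. Your route is arguably more self-contained (it mirrors the upper-bound lemma directly rather than invoking \Cref{lem:treapLg}), and it has the pleasant side effect of showing that the single-log priority fails even on Zipfian inputs, the very family on which \cite{LLW22} was analyzed. One small sloppiness: $\log_2|S_t|=\Theta(t)$ only holds for $t$ above the $\Theta(\log\log n)$ shift coming from $H_n$; this does not affect the final bound, but you should restrict the sum to, say, $t\ge 2\log_2 H_n$ when making the estimate precise.
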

\vspace{-0.15in}
\begin{proof}

We assume WLOG that $n$ is an even power of $2.$
Define $K = \frac{1}{2} \log_2 n.$
We partition $[n]$ into $K + 1$ segments $S_1, \ldots, S_K, S_{K+1} \subseteq [n]$.
For $i = 1, 2, \ldots, K$, we add $2^{1-i} \cdot n / K$ elements to $S_i$.
Thus, $S_1$ has $n / K$ elements, $S_2$ has $n / 2K$, and $S_K$ has $\sqrt{n} / K$ elements.
The rest are moved to $S_{K+1}.$

Now, we can define the distribution $\bp$.
Elements in $S_{K+1}$ have zero-mass.
For $i = 1, 2, \ldots, K$, elements in $S_i$ has probability mass $2^{i - 1} / n.$
One can directly verify that $\bp$ is indeed a probability distribution over $[n].$

In the Treap with the given priority assignment, $S_i$ forms a subtree of expected height $\Omega(\log_2 n)$ since $\Abs{S_i} \ge n^{1/3}$ for any $i = 1, 2, \ldots, K$ (\Cref{lem:treapLg}).
In addition, every element of $S_i$ passes through $S_{i+1}, S_{i+2}, \ldots, S_{K}$ on its way to the root since they have strictly larger priorities.
Therefore, the expected depth of element $x \in S_i$ is $\Omega((K - i) \log_2 n).$
One can lower bound the expected access time (which is the expected depth) as:
\resizebox{0.995\hsize}{!}
{
$
\begin{aligned}
&\E_{x \sim \bp}[\depth(x)]
\gtrsim \sum_{i = 1}^K \sum_{x \in S_i} p_x \cdot (K - i) \cdot \log_2 n \\
&= \sum_{i = 1}^K \bp(S_i) \cdot (K - i) \cdot \log_2 n 
= \sum_{i = 1}^K \frac{1}{K} \cdot (K - i) \cdot \log_2 n \\
&\gtrsim K \log_2 n \gtrsim \log_2^2 n,
\end{aligned}
$
}

where we use $\bp(S_i) = |S_i| \cdot 2^{i - 1} / n = 1 / K$ and $K = \Theta(\log_2 n).$
That is, the expected access time is at least $\Omega(\log_2^2 n)$.
\end{proof}

\section{Learning-Augmented B-Trees}
\label{sec:static_b_main}
\vspace{-0.02in}
We now extend the ideas above, specifically the composite priority notions, to B-Trees in the \emph{External Memory Model}. The main results are shown as follows. Full details are included in \cref{sec:static_b_appendix}. We show that the learning-augmented B-Treaps (\cref{sec:b-treaps_construct}) 
obtain static optimality (\cref{sec:btrees_static_opt}) and is robust to the noisy predicted scores (\cref{sec:btrees_robust}).

\begin{theorem}[Learning-Augmented B-Treap via Composite Priorities]
\label{thm:b_tree_main}
Denote $\bw = (w_{1},\cdots,w_{n}) \in (0,1)^n$ as a score associated with each element of $[n]$ such that $\|\bw\|_1 = O(1)$ and a branching factor $B = \Omega(\ln^{1/(1-\alpha)} n)$, consider the following priority assignment scheme:
\begin{align*}
    \pri_{x} \coloneqq -\lfloor \log_2 \log_B \frac{1}{w_x} \rfloor + \delta_x,~ \delta_x \sim U(0, 1).
\end{align*}
There is a randomized data structure that maintains a B-Tree $T^B$ over $U$ such that
\vspace{-0.05in}
\begin{enumerate}
\item Each item $x$ has expected depth $O(\frac{1}{\alpha}\log_B (1/w_x)).$
\item Insertion or deletion of item $x$ into/from $T$ touches $O(\frac{1}{\alpha}\log_B (1/w_x))$ nodes in $T^B$ in expectation.
\item Updating the weight of item $x$ from $w$ to $w'$ touches $O(\frac{1}{\alpha}|\log_B(w'/w)|)$ nodes in $T^B$ in expectation.
\end{enumerate}
\vspace{-0.05in}
In addition, if $B = O(n^{1/2 - \delta})$ for some $\delta > 0$, all above performance guarantees hold with high probability $1-\delta$.
If we are given the frequency $f_x$ for each $x$ and set the priority as
\begin{align*}
    \pri_x \coloneqq - \left\lfloor \log_2 \log_B \frac{m}{f_x} \right\rfloor + \delta_x, \delta_x \sim U(0, 1).
\end{align*}
Then the total access cost $O(\sum_x f_x \log_B(m/f_x))$ achieves the static optimality.
\end{theorem}

\section{Dynamic Learning-Augmented Search Trees}
\label{sec:dynamic_main}

In this section, we investigate the properties of dynamic search trees that permit modifications concurrent with sequence access. Prioritizing items that are anticipated to be accessed in the near future to reside at lower depths within the tree can significantly reduce access times. Nonetheless, updating the B-trees introduces additional costs. The overarching goal is to minimize the composite cost, which includes both the access operations across the entire sequence and the modifications to the B-trees. In the main content, we specifically concentrate on the study of locally dynamic B-trees, 
which are characterized by the restriction that tree modifications are limited solely to the adjustment of priorities for the items being accessed.

Here, we construct a dynamic learning-augmented B-trees that achieves the \textit{working-set property}. 
In data structures, the working set is the collection of data that a program uses frequently over a given period. This concept is important because it helps us understand how a program interacts with memory and thus enables us to design more efficient data structures and algorithms.
For example, if a program is sorting a list, the working set might be the elements of the list it is comparing and swapping right now. The size of the working set can affect how fast the program runs. A smaller working set can make the program run faster because it means the program doesn't need to reach out to slower parts of memory as often. In other words, if we know which parts of a data structure are used most, we can organize the data or even the memory in a way that makes accessing these parts faster, which can speed up the entire program.

We define the \textit{working-set size} as the number of distinct items accessed between two consecutive accesses. Correspondingly, we design a time-varying score, \textit{working-set score}, as the reciprocal of the square of one plus working-set size. We will show that the working-set score is locally changed and there exists a data structure that achieves the \textit{working-set property}, which states that the time to access an element is a logarithm of its working-set size.

The formal definitions and the main theorems in this section are presented as follows. We include more general results for dynamic B-trees and omitted proofs in \cref{sec:dynamic}.

\begin{definition}[Previous and Next Access $\prev(i, x)$ and $\nxt(i, x)$]
Let $\prev(i, x)$ be the previous access of item $x$ at or before time $i$, i.e, $\prev(i, x) \coloneqq \max\Set{i' \le i}{x(i') = x}.$
Let $\nxt(i, x)$ to be the next access of item $x$ after time $i$, i.e, $\nxt(i, x) \coloneqq \min\Set{i' > i}{x(i') = x}.$
\end{definition}

\begin{definition}[Working-set Size $\ws(i,x)$]
\label{defn:workingSetSize_main}
Define the \emph{working-set size} $\ws(i,x)$ as the number of distinct items accessed between the previous access of item  $x$ at or before time $i$ and the next access of item $x$ after time $i$.
That is,
\begin{align*}
    \ws(i, x) \defeq |\{x(\prev(i,x)+1),\cdots,x(\nxt(i,x))\}|.
\end{align*}
If $x$ does not appear after time $i$, we define $\ws(i, x) \coloneqq n.$
\end{definition}

Note that this definition captures the temporal locality and diversity of user behavior around item. Although it requires knowledge of future access items, it aligns conceptually with practices in recommendation systems, where temporal context is used to model user intent and predict relevance. For example, session-based recommendation often considers the diversity of items within a user's recent session \cite{10.1145/3465401}. In practice, we can approximate this score using causal proxies (e.g., past-only working-set size) or apply machine learning models to predict it based on observable access patterns.

\begin{definition}
    [Working-set Score $\omega(i,x)$]
    \label{defn:ris_main}
    Define the time-varying score as the reciprocal of the square of one plus working-set size. That is,
    \vspace{-0.05in}
    \[
    \omega(i,x) = \frac{1}{(1+\ws(i,x))^2}
    \]
\end{definition}

\begin{restatable}[Dynamic Search Tree with Working-set Priority]{theorem}{BSTworkingset}
    \label{thm:btreapRIS_main}
    With the working-set size $\ws(i,x)$ known and the branching factor $B=\Omega(\ln^{1.1} n)$, there is a randomized data structure that maintains a B-tree $T^B$ over $[n]$ with the priorities assigned as
    \[
    \pri(i,x)= -\lfloor\log_2\log_B(1+\ws(i,x))^2\rfloor + U(0,1).
    \]
    Upon accessing the item $x$ at time $i$, the expected depth of item $x$ is $O(\log_2 (1+\ws(i,x)).$ 
    The expected total cost for processing the whole access sequence $\mX$ is the following. The data structure satisfies the working-set property.

    \resizebox{0.998\hsize}{!}{
    $
    \begin{aligned}
        \cost(\mX,\omega) = 
     O
    \left(
    n\log_B n + \sum_{i=1}^m \log_B (1+\ws(i,x))
    \right)
    .
    \end{aligned}
    $
    }
    
    In particular, if $B=O\left(n^{1/2-\delta}\right)$ for some $\delta>0$, the guarantees hold with probability $1-\delta$.
\end{restatable}

\vspace{-0.05in}
\paragraph{Remark.}
Consider two sequences with length $m$, $\mX_1=(1,2,\cdots,n,1,2,\cdots,n,\cdots,1,2,\cdots,n)$, $\mX_2=(1,1,\cdots,1,2,2,\cdots,2,\cdots,n,n,\cdots,n)$. Two sequences have the same total cost if we have a fixed score. However, $X_2$ should have less cost because of its repeated pattern. 
Given the frequency $\freq$ as a time-invariant priority, by \Cref{thm:b_tree_main}, the optimal static costs are 
\[
\cost(\mX_1,\freq)=\cost(\mX_2,\freq)=O(m\log_2 n).
\]
But for the dynamic BSTs, with the working-set score, we calculate both costs from \Cref{thm:btreapRIS_main} as
\begin{align*}
\cost(\mX_1,\omega)&=O(m\log_2 (n+1)),\\
\cost(\mX_2,\omega)&=O(n\log_2 n + m\log_2 3).
\end{align*}

This means that our proposed priority can better capture the timing pattern of the sequence and thus perform better than the optimal static setting.

Finally, we use the following theorem to show the robustness of the results when the scores are inaccurate.

\begin{restatable}[Dynamic Search Tree with Working-set Priority]{theorem}{ws_robustness}
\label{thm:btreapRISestimation_main}
    Given
    the predicted locally changed working-set score $\tilde{\omega}(i) \in(0,1)^n$ satisfying $\|\tilde{\omega}(i)\|_1=O(1)$, $\tilde{\omega}_{i,j} \geq 1/\poly(n)$ and the branching factor  $B = \Omega(\ln^{1.1}n)$,
    there is a randomized data structure that
    maintains a B-Tree over the $n$ keys such that the expected total cost for processing the whole access sequence $\mX$, $\cost(\mX,\tilde{\omega})$, is
    \vspace{-0.05in}
    \[
    \cost(\mX,\omega) + O\left(\sum_{i=1}^m \left| \log_B \omega_{i,x(i)} -  \log_B \tilde{\omega}_{i,x(i)}\right| \right).
    \]
    \vspace{-0.1in}
    
    In particular, if $B = O(n^{1/2 - \delta})$ for some $\delta > 0$, the guarantees hold with probability $1-\delta$.
\end{restatable}

\section{Experiments}
\label{sec:exp}
\vspace{-0.04in}
In this section, we give experimental results that compare our learning-augmented Treap (\textit{learn-bst}) with learning-augmented BST \cite{LLW22} (\textit{learn-llw}), learning-augmented skip-list \cite{fu2024learningaugmentedskiplists} (\textit{learn-skiplist}) and classical search tree data structures including Red-Black Trees (\textit{red-black}), AVL Trees (\textit{avl}), Splay Trees (\textit{splay}), B-Trees of order 3 (\textit{b-tree}), and randomized Treaps (\textit{random-treap}). Experiments are conducted in a similar manner in~\citet{LLW22}: \textbf{(1)} All keys are inserted in a random order to avoid insertion order sensitivity. \textbf{(2)} The total access cost is measured by the total number of comparisons needed while accessing keys. 

We consider a synthetic data setting, with $n$ unique items appearing in a sequence of length $m$. We define the frequency of each item $i$ as $f_i$ and its relative frequency as $p_i=f_i/m$.  All results are based on ten independent trials.

\begin{figure}[!t]
    \centering
    \captionsetup{font=large, aboveskip=0pt, belowskip=12pt}
    \begin{minipage}{0.46\textwidth}
        \centering
        \includegraphics[width=\textwidth]{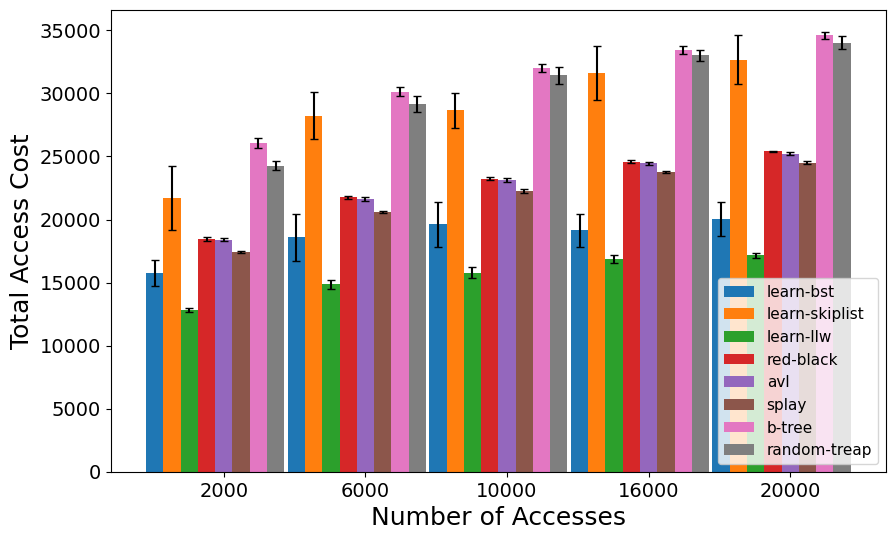}
        \vspace{-21pt}
        \caption{Zipfian distribution, $\alpha=1$.}
        \label{fig:zipf}
        \vspace{15pt}
    \end{minipage}
    \begin{minipage}{0.46\textwidth}
        \centering
        \includegraphics[width=\textwidth]{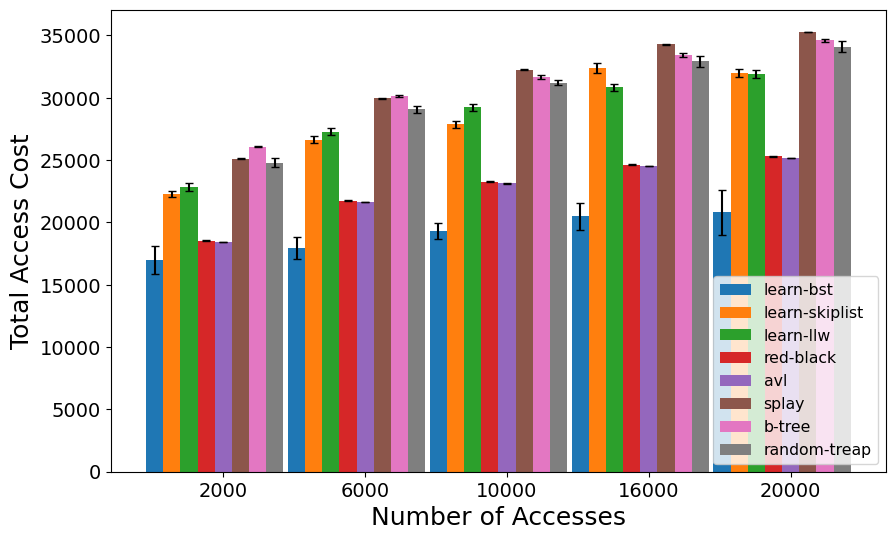}
        \vspace{-21pt}
        \caption{Adversarial distribution.}
        \label{fig:custom}
        \vspace{15pt}
    \end{minipage}
    \begin{minipage}{0.46\textwidth}
        \centering
        \includegraphics[width=\textwidth]{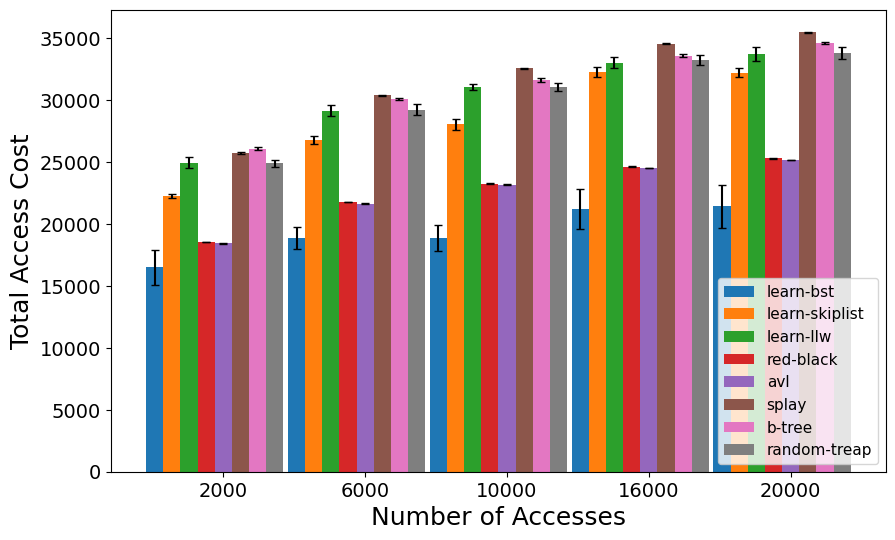}
        \vspace{-21pt}
        \caption{Uniform distribution.}
        \label{fig:uniform}
        \vspace{-10pt}
    \end{minipage}
\end{figure}

\vspace{-0.05in}
\subsection{Perfect Prediction Oracle on Frequency}
\label{sec:exp_perfect_predict}
\vspace{-0.03in}
We first assume that we are given a perfect prediction oracle on the item frequency. The data follows one of three distributions: the Zipfian distribution, the distribution described in \cref{thm:p_pri_bad} (adversarial distribution), and the uniform distribution. We set $n=1000$ and vary $m$ over $[2000, 6000, 10000, 16000, 20000]$. The $x$-axis represents the number of unique items, and the $y$-axis denotes the number of comparisons made, which measures access cost.

\vspace{-0.08in}
\paragraph{Zipfian Distribution.} The Zipfian distribution with parameter $\alpha$ has relative frequencies $p_i= \frac{1}{i^\alpha H_{n,\alpha}}$, where $H_{n,\alpha} = \sum_{i=1}^n \frac{1}{i^\alpha}$ is the $n^{th}$ generalized harmonic number of order $\alpha$. In our experiment, we set $\alpha=1$. As shown in Figure~\ref{fig:zipf}, our Treaps outperform all other data structures except \cite{LLW22}, which explicitly assumes a Zipfian distribution. 

\vspace{-0.08in}
\paragraph{Adversarial Distribution.} In the proof of \cref{thm:p_pri_bad}, we construct a distribution with relative frequency given by $p_i = \frac{2(n-i+1)}{n(n+1)}$. We prove that using the priority assignment as in \cite{LLW22}, the expected depth is $\Omega(n)$. 
As shown in \cref{fig:custom}, the data structure in \citet{LLW22} performs significantly worse under this adversarial distribution compared to the Zipfian case, while our Treaps maintain the best performance among all data structures.

\vspace{-0.08in}
\paragraph{Uniform Distribution.} We also consider uniform distribution, where each item has a relative frequency of $p_i=\frac{1}{n}$. As shown in \cref{fig:uniform}, our Treaps outperform all other data structures as well.

\vspace{-0.08in}
\subsection{Inaccurate Prediction Oracle on Frequency}
\label{sec:exp_inaccurate}
\vspace{-0.03in}

\begin{figure}
    \centering
    \includegraphics[width=0.468\textwidth]{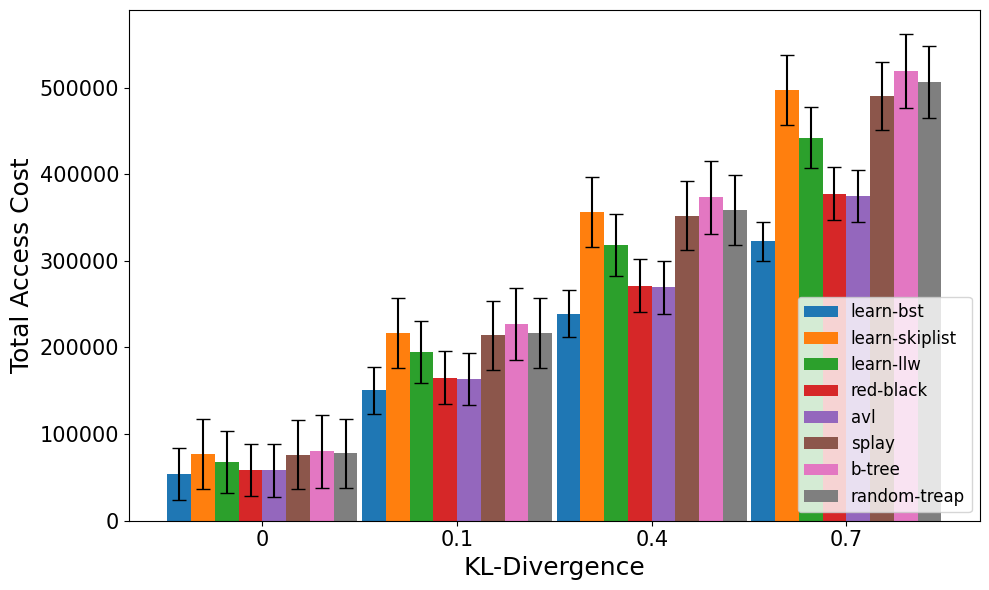}
    \vspace{-5pt}
    \caption{Inaccurate Prediction Oracle.}
    \label{fig:kl_div}
    \vspace{-10pt}
\end{figure}

We consider the scenario where our learning-augmented Treaps are constructed based on an inaccurate prediction of item frequencies. The inaccuracy of the prediction is quantified using the KL divergence between the true relative frequency and the predicted relative frequency. Specifically, we initialize a uniform distribution and optimize it toward a target distribution with a specified KL divergence level using the Sequential Least Squares Programming (SLSQP) optimizer in SciPy \cite{2020SciPy-NMeth}. The $x$-axis represents the KL divergence, while the $y$-axis denotes the total access cost. We set $n=5000$ and $m=10000$.

As shown in \cref{fig:kl_div}, our Treaps not only outperform the investigated alternatives but also exhibit a graceful degradation as the difference between the predicted and actual distribution increases. Additionally, we conducted other robustness experiments on mixtures of distributions, as detailed in \cref{sec:appendix_experiments}.

\subsection{Total Cost and KL Divergence}

In \Cref{thm:noisyFreq}, we show that in our learning-augmented treaps, when the frequency predictor is inaccurate, the additional cost increases linearly with the KL divergence between the true and predicted frequencies. We complement this theoretical result with experiments using the same setup as in \cref{sec:exp_inaccurate}. Specifically, we set $n=5000, m=10000$ and set the KL divergence to vary over $[0,0.1,0.2,0.3,0.4,0.5,0.6,0.7]$. As shown in \cref{fig:kl_cost}, the experimental results confirm that the additional cost grows linearly with the KL divergence.

\begin{figure}
    \centering
    \includegraphics[width=0.45\textwidth]{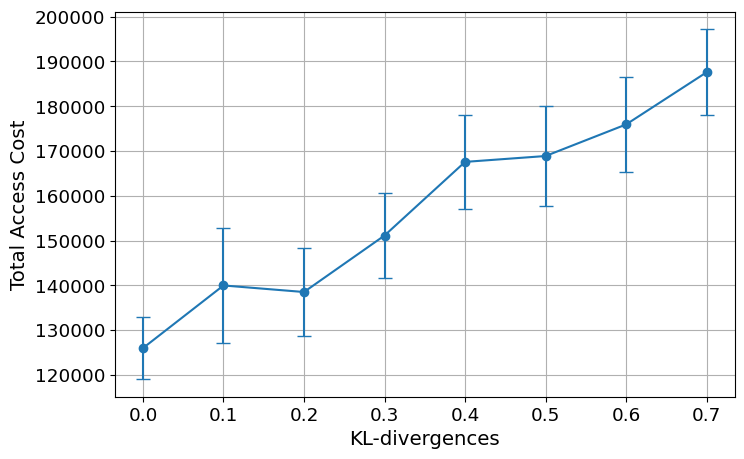}
    \vspace{-5pt}
    \caption{Relationship between total cost and KL divergence between true and predicted frequency using our learning-augmented treaps.}
    \label{fig:kl_cost}
\end{figure}

\section*{Acknowledgments}
We thank Chris Lambert, Richard Peng, Mars Xiang, and Daniel Sleator for their helpful discussions and insights, and Tian Luo, Samson Zhou, and Chunkai Fu for their insights on the experimental code.
\section*{Impact Statement}

This paper presents work whose goal is to advance the field of Machine Learning. There are many potential societal consequences of our work, none of which we feel must be specifically highlighted here.

\bibliography{example_paper}
\bibliographystyle{icml2025}

\newpage
\appendix
\onecolumn
\section{Other Related Works}
\label{appendix:related}
\paragraph{Beyond Worst-Case Analyses of Binary Trees.}
Binary trees are among the most ubiquitous pointer-based data structures.
While schemes without re-balancing do obtain $O(\log_2{n})$ time bounds
in the average case, their behavior degenerates to $\Omega(n)$
on natural access sequences such as $1, 2, 3, \ldots, n$.
To remedy this, many tree balancing schemes with $O(\log_2{n})$ time
worst-case guarantees have been proposed~\citep{AL63,GS78,CLRS09:book}.

Creating binary trees optimal for their inputs has been
studied since the 1970s.
Given access frequencies, the static tree of optimal cost
can be computed using dynamic programs or clever greedies~\citep{HT70,Mehlhorn75,yao1982speed,karpinski1996sequential}.
However, the cost of such computations often exceeds the cost of invoking
the tree.
Therefore, a common goal is to obtain a tree whose cost is within
a constant factor of the entropy of the data, multiple schemes
do achieve this either on worst-case data~\citep{Mehlhorn75},
or when the input follows certain distributions~\citep{AM78}.

A major disadvantage of static trees is that their cost on any
permutation needs to be $\Omega(n \log_2{n})$.
On the other hand, for the access sequence $1, 2, 3, \ldots, n$,
repeatedly bringing the next accessed element to the root gives a lower
cost $O(n)$.
This prompted Allen and Munro to propose the notion of self-organizing
binary search trees.
This scheme was extended to splay trees by~\cite{ST85}.
Splay trees have been shown to obtain many improved cost bounds
based on temporal and spatial locality~\citep{ST85,cole2000dynamic,Col00,iacono2005key}.
In fact, they have been conjectured to have access costs with
a constant factor of optimal on any access sequence~\citep{iacono2013pursuit}.
Much progress has been made towards showing this over the past two decades~\citep{demaine2009geometry,DS09:SkipSplay,chalermsook2019pinning,bose2020competitive}.

From the perspective of designing learning-augmented data structures,
the dynamic optimality conjecture almost goes contrary to the idea
of incorporating predictors.
It can be viewed as saying that learned advice
do not offer gains beyond constant factors,
at least in the binary search tree setting.
Nonetheless, the notion of access sequence, as well as access-sequence-dependent bounds, provides useful starting points for developing
prediction-dependent search trees in online settings.
In this paper, we choose to focus on bounds based on temporal locality,
specifically, the working-set bound.
This is for two reasons:
the spatial locality of an element's next access
is significantly harder to describe compared to the time until the next access;
and the current literature on spatial locality-based bounds, such as
dynamic finger tends to be much more involved~\citep{cole2000dynamic,Col00}.
We believe an interesting direction for extending our composite scores
is to obtain analogs of the unified bound~\citep{iacono2001alternatives,buadoiu2007unified} for B-Trees.

\paragraph{B-Trees and External Memory Model.}
Parameterized B-Trees~\citep{brodal2003lower} have been studied to
balance the runtime of read versus write operations,
and several bounds have been shown with regard to the blocks
of memory needed to be used during an operation. The optimality is discussed in both static and dynamic settings. \citet{rosenberg1981time} compared the B-Tree with the minimum number of nodes (denoted as \textit{compact}) with non-compact B-Trees and with time-optimal B-Trees. \citet{bender2016b} considers keys have different sizes and gives a cache-oblivious static atomic-key B-Tree achieving the same asymptotic performance as the static B-Tree. When it comes to the dynamic setting, the trade-off between the cost of updates and accesses is widely studied \citep{o1996log,jagadish1997incremental,jermaine1999novel,buchsbaum2000external,yi2012dynamic}.
\citet{bose2008dynamic} studied the dynamic optimality of B-Trees and presented a self-adjusting B-Tree data structure that is optimal up to a constant factor when $B$ is constant.

B-Treap were introduced by \citet{golovin2008uniquely,golovin2009b}
as a way to give an efficient history-independent search tree
in the external memory model.
These studies revolved around obtaining $O(\log_{B}{n})$ worst-case costs
that naturally generalize Treaps.
Specifically, for sufficiently small $B$ (as compared to $n$),
Golovin showed a worst-case depth of $O(\frac 1 \alpha \log_B n)$
with high probability,
where $B = \Omega(\ln^{1 / (1 - \alpha)} n)$.
The running time of this structure has recently been improved
by \citet{safavi2023b} via a two-layer design.

The large node sizes of B-Trees interact naturally with the
external memory model, where memory is accessed in blocks of size $B$~\citep{brodal2003lower,vitter2001external}.
The external memory model itself is widely used in data
storage and retrieval~\cite{margaritis2013efficient},
and has also been studied in conjunction with learned indices~\citep{ferragina2020learnedindex}. Several previous results discuss the trade-off between update time and storage utilization \citep{brown2014bslack,fagerberg2019optimal}. 

\section{Proofs of Treaps}

In this section, we formally prove the properties of traditional treaps in \Cref{sec:static_binary} and the static optimality of our learning-augmented BST in \Cref{sec:staticopt}.

\treapas*
\vspace{-0.15in}
\begin{proof}
Notice that $\depth(x)$, the depth of item $x$ in the Treap, is the number of ancestors of $x$ in the Treap.
Linearity of expectation yields
\begin{align*}
\E[\depth(x)]
=& \sum_{y \in [n]} \E\left[
\mathbf{1}_{y\text{ is an ancestor of }x}\right] \\
=& \sum_{y \in [n]} \Pr\left(\text{$y$ is an ancestor of $x$}\right) \\
=& \sum_{y \in [n]} \Pr\left(\pri_y = \max_{z \in [x, y]} \pri_z\right) \\
= &\sum_{y \in [n]} \frac{1}{\Abs{x - y + 1}} = \Theta(\log_2 n).
\end{align*}
\end{proof}

\staticopt*
\vspace{-0.15in}
\begin{proof}
Given item frequencies $\bf$, we define the following $\bw$ assignment:
\vspace{-0.1in}
\begin{align}
\label{eq:staOptScore}
    w_x \coloneqq \frac{f_x}{m},~ x \in [n].
\end{align}
One can verify that $\|\bw\|_1=O(1)$. By \Cref{thm:compPriTreap}, the expected depth of each item $x$ is $O(\log_2 (m / f_x)).$
\end{proof}
\section{Learning-Augmented B-Trees}
\label{sec:static_b_appendix}

We now extend the ideas above, specifically the composite priority notions, to B-Trees in the \emph{External Memory Model}. We show that the learning-augmented B-Treaps (\cref{sec:b-treaps_construct}) 
obtain static optimality (\cref{sec:btrees_static_opt}) and is robust to the noisy predicted scores (\cref{sec:btrees_robust}).
This model is also the basis of our analyses in online settings
in \Cref{sec:dynamic}.

\subsection{Learning-Augmented B-Treaps}
\label{sec:b-treaps_construct}

We first formalize this extension by incorporating our composite priorities with the B-Treap data structure from \cite{golovin2009b} and introducing offsets in priorities.

\begin{lemma}[B-Treap, \cite{golovin2009b}]
\label{lemma:btreap}
Given the unique binary Treap $(T, {\pri_x})$ over the set of items $[n]$ with their associated priorities, and a target branching factor $B = \Omega(\ln^{1/(1 - \alpha)}n)$ for some $\alpha > 0$. Assuming ${\pri_x}$ are drawn uniformly from $(0, 1)$, we can maintain a B-Tree $T^B$, called the \emph{B-Treap}, uniquely defined by $T$. This allows operations such as \emph{Lookup}, \emph{Insert}, and \emph{Delete} of an item to touch $O(\frac{1}{\alpha}\log_B n)$ nodes in $T^B$ in expectation.

In particular, if $B = O(n^{1/2 - \delta})$ for some $\delta > 0$, all above performance guarantees hold with high probability.
\end{lemma}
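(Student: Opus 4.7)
The plan is to follow Golovin's recipe for deriving the unique B-Tree $T^B$ canonically from the underlying binary Treap $T$. Assign to each item $x$ a \emph{level} $\lvl(x)$ that is roughly $\lfloor \log_B (1/(1 - \pri_x)) \rfloor$, so that higher priority corresponds to a higher level. Group binary-Treap items into B-Tree nodes as follows: walk down $T$, and whenever the level of the current node is strictly smaller than the level of its parent in $T$, begin a new B-Tree node; otherwise keep the node in the same B-Tree bag as its parent. Two consecutive (in in-order) items at the same level whose binary-Treap ancestors of strictly higher level are identical end up in the same B-Tree node. The resulting $T^B$ is uniquely determined by $(T, \pri)$, and the parent/child pointers of $T^B$ are those induced by level boundaries in $T$.

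The analysis then has two key ingredients. First, each B-Tree node contains $\Theta(B)$ binary-Treap items in expectation: conditioning on the set of items lying in a given ``higher-level envelope,'' the subset at the current level is determined by the priorities of those items falling into an interval of relative width $\Theta(1/B)$, which under uniform priorities has expected size $\Theta(B)$. Second, any root-to-leaf path in $T$ crosses at most $O(\tfrac{1}{\alpha}\log_B n)$ level-breakpoints with the required probability: since the maximum priority along such a path is non-increasing and by the standard binary Treap depth analysis (\Cref{lem:treapLg}) the path has length $O(\log_2 n)$, a new level appears on average every $\Theta(\log_2 B)$ binary levels, giving $O(\tfrac{1}{\alpha}\log_B n)$ B-Tree nodes from root to leaf. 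Combining these, Lookup touches exactly the B-Tree nodes along the binary-Treap search path for the queried item, and Insert and Delete only additionally reshape the same B-Tree nodes via local rotations in $T$, so all three operations touch $O(\tfrac{1}{\alpha}\log_B n)$ nodes of $T^B$.

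The main obstacle is that the priorities are only $11$-wise independent, so none of the concentration statements above follow from Chernoff bounds. Instead, one must use bounded-moment inequalities (the $k$-th moment method for $k \le 11$), which only yield polynomial tail decay for the size of a single B-Tree node and for the number of levels along a path. This is exactly why the hypothesis $B = \Omega(\ln^{1/(1-\alpha)} n)$ enters: the failure probability in node-size concentration is of the form $B^{-\Theta(1)}$, so $B$ must be at least a sufficiently large polylog of $n$ for the overall expected cost to remain $O(\tfrac{1}{\alpha}\log_B n)$ after summing over the nodes actually touched. When $B = O(n^{1/2 - \delta})$, the same moment bounds are polynomially small in $n$, so a union bound over all $O(n)$ B-Tree nodes drives the failure probability to $1/\poly(n)$ and promotes the expectation bound to a high-probability bound. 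Once these concentration estimates are in place, the rest — uniqueness of $T^B$ given $T$, and reducing each operation to a single root-to-leaf traversal — follows directly from the binary Treap structure.
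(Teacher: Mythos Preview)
The paper does not prove \Cref{lemma:btreap}; it is quoted verbatim as a black-box result from Golovin~\cite{golovin2009b} and used without argument. So there is no ``paper's own proof'' to compare against --- your proposal is an attempt to reconstruct Golovin's argument, not to match anything in this paper.

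As a reconstruction, your outline is broadly faithful to Golovin's scheme: assigning geometric levels from priorities, collapsing same-level contiguous runs along Treap paths into B-Tree nodes, and bounding the number of level transitions along a root-to-leaf path. The use of bounded-moment inequalities in place of Chernoff, forced by the $11$-wise independence, is also the right mechanism. Two places where your sketch is loose: (i) your justification for the \emph{upper} bound $B = O(n^{1/2-\delta})$ is backwards --- you write that the moment bounds ``are polynomially small in $n$'' under this hypothesis, but a tail of order $B^{-\Theta(1)}$ becomes small in $n$ when $B$ is \emph{large}, not small; the actual role of the upper bound in Golovin's analysis is to keep a single B-Tree node from swallowing a constant fraction of all items (so that the per-node concentration statement is even meaningful), and (ii) the claim that Insert/Delete ``only additionally reshape the same B-Tree nodes'' hides real work --- rotations in $T$ can move items across level boundaries and trigger splits/merges in $T^B$, and bounding the number of affected B-Tree nodes requires a separate amortized/expected argument, not just the Lookup path bound. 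These are genuine technical steps in \cite{golovin2009b} that your plan gestures at but does not supply.
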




The main technical theorem is the following:

\begin{theorem}[Learning-Augmented B-Treap via Composite Priorities]
\label{thm:scoreBTreap}
Denote $\bw = (w_{1},\cdots,w_{n}) \in (0,1)^n$ as a score associated with each element of $[n]$ such that $\|\bw\|_1 = O(1)$ and a branching factor $B = \Omega(\ln^{1/(1-\alpha)} n)$, there is a randomized data structure that maintains a B-Tree $T^B$ over $U$ such that
\begin{enumerate}
\item Each item $x$ has expected depth $O(\frac{1}{\alpha}\log_B (1/w_x)).$
\item Insertion or deletion of item $x$ into/from $T$ touches $O(\frac{1}{\alpha}\log_B (1/w_x))$ nodes in $T^B$ in expectation.
\item Updating the weight of item $x$ from $w$ to $w'$ touches $O(\frac{1}{\alpha}|\log_B(w'/w)|)$ nodes in $T^B$ in expectation.
\end{enumerate}
We consider the following priority assignment scheme:
For any $x$ and its corresponding score $w_x$, we always maintain:
\begin{align*}
    \pri_{x} \coloneqq -\lfloor \log_2 \log_B \frac{1}{w_x} \rfloor + \delta,~ \delta \sim U(0, 1).
\end{align*}

In addition, if $B = O(n^{1/2 - \delta})$ for some $\delta > 0$, all above performance guarantees hold with high probability $1-\delta$.
\end{theorem}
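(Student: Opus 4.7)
The plan is to mirror the analysis of \Cref{thm:compPriTreap}, replacing the binary Treap depth bound with the B-Treap depth bound of \Cref{lemma:btreap}, and using the fact that the outer $\log_4$ and inner $\log_B$ in the composite priority are chosen precisely so the per-tier contributions telescope geometrically to $\log_B(1/w_x)$. The overall depth bound in part~(1) will then drive the insertion, deletion, and weight-update bounds of parts~(2) and~(3), which reduce to standard B-Treap maintenance arguments.

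First I would define the tier $\tau_x \defeq \lfloor \log_4 \log_B (1/w_x) \rfloor = -\lfloor \pri_x \rfloor$ and the tier set $S_t \defeq \Set{x \in [n]}{\tau_x = t}$. The hypothesis $\|\bw\|_1 = O(1)$ gives $|S_t| = O(B^{4^{t+1}})$, equivalently $\log_B |S_t| = O(4^{t+1})$, in direct analogy with \Cref{lemma:scoreBucketSize} but with the new bases. As in the binary case, I assume $\tau_x \ge 0$ WLOG by pinning the $O(1)$ items with $\tau_x < 0$ to the top of the B-Tree.

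The crux is an analog of \Cref{lemma:scoreDepthPerBucket}: the expected number of B-Tree nodes on the root-to-$x$ path in $T^B$ whose keys lie in $S_t$ is $O(\frac{1}{\alpha}\log_B |S_t|)$. I would exploit the structural property that every item in $S_{<t}$ has strictly larger priority than every item in $S_t$, so within any binary-Treap subtree hanging below a tier-$(<t)$ node, the tier-$t$ items behave exactly like a standalone Treap on at most $|S_t|$ items with priorities drawn i.i.d.~from $U(0,1)$ under an $11$-wise independent hash. Applying Golovin's B-Treap depth bound (\Cref{lemma:btreap}) to this sub-instance yields $O(\frac{1}{\alpha}\log_B |S_t|)$ B-Tree node touches while the search passes through tier $t$. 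This is the step I expect to require the most care: it demands that the B-Treap bucketing construction respects the tier decomposition, and that the $11$-wise independence hypothesis of \Cref{lemma:btreap} survives restriction to the items of a single tier.

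Summing the per-tier contributions over $t = 0, 1, \ldots, \tau_x$ gives
\[
    \E[\depth(x)] \lesssim \frac{1}{\alpha}\sum_{t=0}^{\tau_x} \log_B |S_t| \lesssim \frac{1}{\alpha}\sum_{t=0}^{\tau_x} 4^{t+1} \lesssim \frac{1}{\alpha} \cdot 4^{\tau_x+1} \lesssim \frac{1}{\alpha}\log_B\frac{1}{w_x},
\]
by the geometric sum together with $4^{\tau_x} \le \log_B(1/w_x)$. For parts~(2) and~(3), standard B-Treap maintenance (\Cref{lemma:btreap}) supports insertions and deletions by rebuilding along the accessed node's root path, so the cost is within a constant factor of the expected depth just computed. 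For a weight update from $w$ to $w'$, only tiers strictly between $\tau_x$ and the new tier $\tau_x'$ must be restructured, and the same tier-by-tier geometric sum telescopes to $O(\frac{1}{\alpha}|\log_B(w'/w)|)$. Finally, the high-probability statement for $B = O(n^{1/2-\delta})$ follows by invoking the high-probability extension of \Cref{lemma:btreap} tier-by-tier, together with the $11$-wise independence of the $\delta_x$ offsets.
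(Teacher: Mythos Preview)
Your proposal is correct and matches the paper's approach: define tiers via $\tau_x = \lfloor \log_4 \log_B(1/w_x)\rfloor$, bound $|S_t| = O(B^{O(4^t)})$, apply \Cref{lemma:btreap} within each tier to get $O(\frac{1}{\alpha}\log_B|S_t|)$ B-Tree nodes per tier, and sum geometrically. The paper makes the construction slightly more explicit than you do---it decomposes the underlying binary Treap into maximal same-tier subtrees, applies Golovin's B-Treap to each piece separately, and then glues the resulting B-Treaps together---which is exactly the structural fact you flagged as needing care (``the B-Treap bucketing construction respects the tier decomposition'').
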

The learning-augmented B-Treap is created by applying \Cref{lemma:btreap} to a partition of the binary Treap $T$. Each item $x$ has a priority in the binary Treap $T$, defined as:
\begin{align}
\label{eq:scoreBTreapPri}
    \pri_x = -\left\lfloor \log_2 \log_B \frac{1}{w_x} \right\rfloor + \delta_x, \delta_x \sim U(0, 1), \forall x \in U. 
\end{align}
We then partition the binary Treap $T$ based on each item's tier. The tier of an item is defined as the absolute value of the integral part of its priority, i.e., $\tau_x \defeq \lfloor \log_2\log_B (1 / w_x) \rfloor$.

\begin{proof}[Proof of \Cref{thm:scoreBTreap}]
To formally construct and maintain $T^B$, we follow these steps:

\begin{enumerate}
\item Start with a binary Treap $(T, {\pri_x})$ with priorities defined using equation \eqref{eq:scoreBTreapPri}.
\item Decompose $T$ into sub-trees based on each item's tier, resulting in a set of maximal sub-trees with items sharing the same tier.
\item For each $T_i$, apply \Cref{lemma:btreap} to maintain a B-Treap $T^B_i$.
\item Combine all the B-Treaps into a single B-Tree, such that the parent of $\root(T^B_i)$ is the B-Tree node containing the parent of $\root(T_i)$.
\end{enumerate}

Now, let's analyze the depth of each item $x$. Keep in mind that any item $y$ in the same B-Tree node shares the same tier. Therefore, we can define the tier of each B-Tree node as the tier of its items.

Suppose $x_1, x_2, \ldots$ are the B-Tree nodes we encounter until we reach $x$. The tiers of these nodes are in non-increasing order, that is, $\tau_{x_i} \ge \tau_{x_{i+1}}$ for any $i$. We'll define $C_t$ as the number of items of tier $t$ for any $t$. As per the definition (refer to equation \eqref{eq:scoreBTreapPri}), we have:
\begin{align*}
C_t = O(B^{2^{t}}), \forall t
\end{align*}

Using \Cref{lemma:btreap} and the fact that $B = O(C_t^{1/2}), t \ge 1$, we find that the number of nodes among ${x_i}$ of tier $t$ is $O(\frac{1}{\alpha}\log_B C_t) = O(2^t / \alpha)$ with high probability. As a result, the number of nodes touched until reaching $x$ is, with high probability:

\begin{align*}
\sum_{t = 0}^{\tau_x} O(2^t / \alpha) = O(2^{\tau_x} / \alpha) = O\left(\frac{1}{\alpha} \log_B \frac{1}{w_x}\right)
\end{align*}

This analysis is also applicable when performing Lookup, Insert, and Delete operations on item $x$.

The number of nodes touched when updating an item's weight can be derived from first deleting and then inserting the item.


\end{proof}





\subsection{Static Optimality}
\label{sec:btrees_static_opt}

In this section, we show that with our priority assignment, the learning-augmented B-Treaps are statically optimal. 
Let $x(1), x(2), \ldots, x(m)$ represent an access sequence of length $m$.
We define the relative frequency of each item $x$ as $p_x \defeq \frac{|\Set{i \in [m]}{x(i) = x}|}{m}$. 
A static B-Tree is statically optimal if the depth of item $x$ is:
\begin{align*}
\depth(x) = O\left(\log_B \frac{1}{p_x}\right), \forall x \in [n]
\end{align*}
As a corollary of \Cref{thm:scoreBTreap}, we show that if we are given the relative frequency $p_x$, the learning-augmented B-Treaps with our priority assignment achieves \emph{Static Optimality} with weights $w_x \defeq p_x$.

\begin{corollary}[Static Optimality for B-Treaps]
\label{lemma:SOBTree}
Given the relative frequency $p_x$ of each item $x\in[n]$, and a branching factor $B = \Omega(\ln^{1.1} n)$, there exists a randomized data structure that maintains a B-Tree $T^B$ over $[n]$ such that each item $x$ has an expected depth of $O(\log_B 1/p_x)$. That is, $T^B$ achieves \emph{Static Optimality}, meaning the total number of nodes touched is $O(OPT^{\text{static}}_B)$ in expectation, where:
\begin{align}
\label{eq:SOBTree}
OPT^{\text{static}}_B \defeq m \cdot \sum_{x \in [n]} p_x \log_B \frac{1}{p_x}
\end{align}
Furthermore, if $B = O(n^{1/2 - \delta})$ for some $\delta > 0$, all above performance guarantees hold with high probability.
\end{corollary}

\subsection{Robustness Guarantees}
\label{sec:btrees_robust}

In practice, we would not have access to the relative frequency $p_x$. Instead, we will have an inaccurate prediction $q_x$. Let $\bp$ and $\bq$ be the probability distribution over $[n]$ such that $\bp(x)=p_x,\bq(x)=q_x$.
In this section, we will show that B-Treap performance is robust to the error. Specifically, we analyze the performance under various notions of error in the prediction. The notions listed here are the ones used for learning discrete distributions (refer to \cite{canonne2020short} for a comprehensive discussion).

\begin{corollary}[Kullback—Leibler (KL) Divergence]
\label{coro:SOKL}
If we are given a density prediction $\bq$ such that $d_{KL}(\bp; \bq) = \sum_x p_x \ln (p_x / \bq_x) \le \eps$, the total number of touched nodes is
\begin{align*}
    O\left(OPT^{\text{static}}_B  + \frac{\eps m}{\ln B}\right)
\end{align*}
\end{corollary}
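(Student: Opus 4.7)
The plan is to invoke \Cref{thm:scoreBTreap} directly with the weights $w_x \defeq \wt{p}_x$. Since $\wt{\bp}$ is a probability distribution, $\|\wt{\bp}\|_1 = 1 = O(1)$, so the hypothesis is satisfied, and each item $x$ has expected depth $O(\log_B(1/\wt{p}_x))$ in the resulting B-Treap. Summing over the access sequence and using $f_x = m p_x$ gives an expected total cost of $O\bigl(m \sum_{x} p_x \log_B (1/\wt{p}_x)\bigr)$.

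The key algebraic step is to split the cross-entropy-like quantity into the true static optimum plus a KL term:
\begin{align*}
\sum_{x} p_x \log_B \frac{1}{\wt{p}_x}
= \sum_{x} p_x \log_B \frac{1}{p_x} + \sum_{x} p_x \log_B \frac{p_x}{\wt{p}_x}.
\end{align*}
The first summand equals $OPT^{\text{static}}_B / m$ by the definition in \eqref{eq:SOBTree}. The second summand is exactly the KL divergence expressed in base $B$ rather than base $e$, so
\begin{align*}
\sum_{x} p_x \log_B \frac{p_x}{\wt{p}_x}
= \frac{1}{\ln B}\sum_{x} p_x \ln \frac{p_x}{\wt{p}_x}
= \frac{d_{KL}(\bp;\wt{\bp})}{\ln B}
\le \frac{\eps}{\ln B}.
\end{align*}
Multiplying through by $m$ yields the claimed bound $O\bigl(OPT^{\text{static}}_B + \eps m / \ln B\bigr)$.

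There is essentially no hard step: the whole argument is a one-line application of \Cref{thm:scoreBTreap} followed by an add-and-subtract identity that converts a cross-entropy under $\wt{\bp}$ into entropy under $\bp$ plus KL divergence. The only mild subtlety is remembering the change of base, which is what produces the $1/\ln B$ factor in the overhead term; this is also where the assumption $|p_x|, |\wt{p}_x| \ge \Theta(1/n)$ keeps all logarithms finite and ensures the depth bound from \Cref{thm:scoreBTreap} (which is stated for general positive weights) applies termwise without edge cases.
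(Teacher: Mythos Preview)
Your proposal is correct and follows essentially the same approach as the paper: apply \Cref{thm:scoreBTreap} with $w_x=\wt{p}_x$, sum the expected depths over the access sequence to obtain a cross-entropy expression, then split that into entropy plus KL divergence and convert the base to get the $1/\ln B$ factor. Your write-up is in fact more explicit than the paper's one-line computation.
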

\begin{proof}
Given the inaccurate prediction $\bq$, the total number of touched nodes in $T^B$ is
\begin{align*}
    &O\left(\sum_x m \cdot p_x \log_B \frac{1}{q_x}\right) \\
    &= O\left(\sum_x m \cdot p_x \log_B \frac{1}{p_x} + m \cdot \sum_x p_x \log_B \frac{p_x}{q_x}\right) \\
    &= O(OPT^{\text{static}}_B + m \frac{d_{KL}(\bp;\bq)}{\ln B})
\end{align*}
\end{proof}
\begin{corollary}[$\chi^2$]
\label{coro:SOchi2}
If we are given a density prediction $\bq$ such that $\chi^2(\bp; \bq) = \sum_x (p_x - q_x)^2 / q_x \le \eps$, the total number of touched nodes is
\begin{align*}
    O\left(OPT^{\text{static}}_B  + \frac{\eps m}{\ln B}\right)
\end{align*}
\end{corollary}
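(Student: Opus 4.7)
The plan is to deduce this corollary directly from the previously established KL divergence bound (Corollary~\ref{coro:SOKL}) via the standard information-theoretic inequality $d_{KL}(\bp; \wt{\bp}) \le \chi^2(\bp; \wt{\bp})$. Once that inequality is in place, the hypothesis $\chi^2(\bp; \wt{\bp}) \le \eps$ immediately implies $d_{KL}(\bp; \wt{\bp}) \le \eps$, and invoking Corollary~\ref{coro:SOKL} with this bound yields the claimed $O\!\left(OPT^{\text{static}}_B + \eps m / \ln B\right)$ node-access bound.

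To establish the inequality, I would first record the elementary pointwise fact $-\ln(y) \le 1/y - 1$ valid for all $y > 0$. This can be verified by differentiating $f(y) = 1/y - 1 + \ln(y)$: one sees that $f'(y) = (y-1)/y^2$ vanishes only at $y = 1$, where $f$ attains its minimum value $0$, so $f \ge 0$ throughout. Substituting $y = \wt{\bp}_x / p_x$ and taking a $\bp$-weighted sum gives
\[
d_{KL}(\bp; \wt{\bp}) = -\sum_x p_x \ln\!\left( \wt{\bp}_x / p_x \right) \;\le\; \sum_x p_x \cdot \frac{p_x}{\wt{\bp}_x} - \sum_x p_x \;=\; \sum_x \frac{p_x^2}{\wt{\bp}_x} - 1.
\]
A short expansion of the $\chi^2$ divergence, using $\sum_x p_x = \sum_x \wt{\bp}_x = 1$ to cancel the cross term, shows that $\chi^2(\bp; \wt{\bp}) = \sum_x p_x^2 / \wt{\bp}_x - 1$ as well, which gives $d_{KL}(\bp; \wt{\bp}) \le \chi^2(\bp; \wt{\bp})$ exactly.

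There is essentially no obstacle in this argument; it is a one-line reduction once the classical KL-vs-$\chi^2$ inequality is established. The only minor thing to be careful about is the argument order of both divergences: here both $d_{KL}$ and $\chi^2$ place $\bp$ in the first slot and the prediction $\wt{\bp}$ in the second, matching the convention used in Corollary~\ref{coro:SOKL}, so the reduction goes through verbatim without any need to swap roles or impose additional smoothness/boundedness on $\wt{\bp}$.
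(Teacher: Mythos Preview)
Your proposal is correct and takes essentially the same approach as the paper: reduce to Corollary~\ref{coro:SOKL} via the standard inequality $d_{KL}(\bp;\wt{\bp}) \le \chi^2(\bp;\wt{\bp})$. The paper simply cites this inequality without proof, whereas you supply the short derivation via $-\ln y \le 1/y - 1$; otherwise the arguments are identical.
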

\begin{proof}
The corollary follows from \cref{coro:SOchi2} and the fact $d_{KL}(\bp; \bq) \le \chi^2(\bp; \bq) \le \eps.$
\end{proof}
\begin{corollary}[$L_{\infty}$ Distance]
\label{coro:SOLinf}
If we are given a density prediction $\bq$ such that $\|\bp - \bq\|_{\infty} \le \eps$, the total number of touched nodes is
\begin{align*}
    O\left(OPT^{\text{static}}_B  + m \log_B(1+\eps n)\right)
\end{align*}
\end{corollary}
\begin{proof}
For item $x$ with its marginal probability smaller than $1 / 1000n$, its expected depth in the B-Treap is $O(\log_B n)$ using either $p_x$ or $q_x$ as its score.
If item $x$'s marginal probability is at least $1 / 1000n$, the $L_{\infty}$ distance implies that 
\[
\frac{p_x}{\tilde{p}_x} =
1+ \frac{p_x-\tilde{p}_x}{\tilde{p
}_x} 
\leq 1 + \frac{\eps}{1/(1000n)}
=1+1000(1+\eps n)
\]
Therefore, item $x$'s expected depth in the B-Treap with score $\bq$ is roughly
\begin{align*}
    O\left(\log_B \frac{1}{p_x} + \log_B \frac{p_x}{q_x}\right) \le O\left(\log_B \frac{1}{p_x} + \log_B (1+\eps n)\right)
\end{align*}
The corollary follows.
\end{proof}
\begin{corollary}[$L_2$ Distance]
\label{coro:SOL2}
If we are given a density prediction $\bq$ such that $\|\bp - \bq\|\le \eps$, the total number of touched nodes is
\begin{align*}
    O\left(OPT^{\text{static}}_B  + m \log_B(1+\eps n)\right)
\end{align*}
\end{corollary}
\begin{proof}
This claim follows from \cref{coro:SOLinf} and the fact $\|\bp - \bq\|_{\infty} \le \|\bp - \bq\|_{2} \le \eps.$
\end{proof}
\begin{corollary}[Total Variation]
\label{coro:SOTV}
If we are given a density prediction $\bq$ such that $d_{TV}(\bp, \bq) = 0.5 \|\bp - \bq\|_1 \le \eps$, the total number of touched nodes is
\begin{align*}
    O\left(OPT^{\text{static}}_B  + m \log_B(1+\eps n)\right)
\end{align*}
\end{corollary}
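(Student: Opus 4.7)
The plan is to reduce this corollary directly to \Cref{coro:SOLinf}, mirroring the way \Cref{coro:SOL2} is obtained. The only substantive ingredient is a standard norm comparison between $\|\cdot\|_1$ and $\|\cdot\|_\infty$ on the probability simplex, after which the desired bound follows by invoking the $L_\infty$ result as a black box. I do not anticipate any real obstacle here, since the argument is essentially one line of inequality chasing plus an application of an already-proved corollary.

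First, I would unpack the hypothesis: $d_{TV}(\bp, \wt{\bp}) = \tfrac{1}{2}\|\bp - \wt{\bp}\|_1 \le \eps$ is equivalent to $\|\bp - \wt{\bp}\|_1 \le 2\eps$. Next, I would observe the elementary inequality $\|\bp - \wt{\bp}\|_\infty \le \|\bp - \wt{\bp}\|_1$, since the maximum entry of any vector is dominated by the sum of its absolute entries. Chaining these two facts, we get $\|\bp - \wt{\bp}\|_\infty \le 2\eps$, so $\wt{\bp}$ satisfies the hypothesis of \Cref{coro:SOLinf} with error parameter $2\eps$ in place of $\eps$.

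Applying \Cref{coro:SOLinf} then immediately gives that the total number of nodes touched is $O(OPT^{\text{static}}_B + m \log_B(1 + 2\eps n))$. Absorbing the factor of $2$ into the logarithm, we have $\log_B(1 + 2\eps n) = O(\log_B(1+\eps n))$ (the $1+2\eps n$ and $1+\eps n$ differ by at most a factor of $2$ whenever $\eps n \ge 1$, and both are $O(1)$ when $\eps n < 1$), yielding exactly the claimed bound $O(OPT^{\text{static}}_B + m\log_B(1+\eps n))$. The only thing worth flagging is the mild assumption $|p_x|, |\wt{p}_x| \ge \Theta(1/n)$ inherited from the $L_\infty$ corollary; that assumption is already stated globally in the subsection, so no extra work is needed here.
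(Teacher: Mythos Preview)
Your proposal is correct and matches the paper's own proof essentially verbatim: the paper's argument is the single line ``follows from \cref{coro:SOLinf} and the fact $\|\bp - \wt{\bp}\|_{\infty} \le \|\bp - \wt{\bp}\|_{1} \le 2\eps$,'' which is exactly your reduction. Your added justification for absorbing the factor of $2$ into the big-$O$ is fine and makes the one-line argument more explicit.
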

\begin{proof}
This claim follows from \cref{coro:SOLinf} and the fact $\|\bp - \bq\|_{\infty} \le \|\bp - \bq\|_{1} \le 2\eps.$
\end{proof}
\begin{corollary}[Hellinger Distance]
\label{coro:SOH}
If we are given a density prediction $\bq$ such that $d_{H}(\bp, \bq) = 0.5 \|\sqrt{\bp} - \sqrt{\bq}\|_2 \le \eps$, the total number of touched nodes is
\begin{align*}
    O\left(OPT^{\text{static}}_B  +  m \log_B(1+\eps n)\right)
\end{align*}
\end{corollary}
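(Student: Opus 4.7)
The plan is to reduce this corollary to one of the previously established cases, exactly in the style of the $L_2$, total variation, and $\chi^2$ corollaries. Specifically, I want to use a standard inequality relating the Hellinger distance to a distance already handled — either $L_\infty$, $L_1$, or total variation.

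The cleanest relation I would use is the well-known inequality $d_{TV}(\bp, \wt{\bp}) \le \sqrt{2}\, d_H(\bp, \wt{\bp})$, which follows from Cauchy--Schwarz applied to $|p_x - \wt{p}_x| = |\sqrt{p_x} - \sqrt{\wt{p}_x}| \cdot (\sqrt{p_x} + \sqrt{\wt{p}_x})$. Given the hypothesis $d_H(\bp, \wt{\bp}) \le \eps$, this immediately yields $d_{TV}(\bp, \wt{\bp}) \le \sqrt{2}\,\eps$. Applying \Cref{coro:SOTV} with $\eps' = \sqrt{2}\,\eps$ then gives total touched nodes $O(OPT^{\text{static}}_B + m\log_B(1 + \sqrt{2}\,\eps n))$, and since $\log_B(1 + \sqrt{2}\,\eps n) = O(\log_B(1 + \eps n))$, the stated bound follows.

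Alternatively, I could route through $L_\infty$ by noting $\|\bp - \wt{\bp}\|_\infty \le \|\bp - \wt{\bp}\|_1 = 2 d_{TV}(\bp, \wt{\bp}) \le 2\sqrt{2}\,\eps$, and then directly invoke \Cref{coro:SOLinf}; this is really the same argument chained through one more inequality and keeps the corollary consistent in style with the preceding ones.

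There is no real obstacle here — the proof is a one-line reduction — so the only thing to get right is to cite (or briefly justify) the Hellinger-to-TV bound cleanly, so the reader does not have to look it up. I would therefore state the inequality explicitly, note that it is a Cauchy--Schwarz calculation on $\|\sqrt{\bp} - \sqrt{\wt{\bp}}\|_2$ combined with $\|\sqrt{\bp} + \sqrt{\wt{\bp}}\|_2 \le \sqrt{2}$ (since $\bp$ and $\wt{\bp}$ are probability distributions), and then defer to \Cref{coro:SOTV}. The proof should fit in two or three lines, matching the brevity of \Cref{coro:SOL2} and \Cref{coro:SOTV}.
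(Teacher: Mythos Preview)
Your proposal is correct and matches the paper's approach: the paper's proof is the single line ``This claim follows from \cref{coro:SOLinf} and the fact $\|\bp - \wt{\bp}\|_{\infty} \le 2\sqrt{2}\,d_{H}(\bp, \wt{\bp}) \le 2\sqrt{2}\,\eps$,'' which is exactly your alternative route through $L_\infty$ (and your primary route through \cref{coro:SOTV} is the same argument with one intermediate inequality fewer).
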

\begin{proof}
This claim follows from \cref{coro:SOLinf} and the fact $\|\bp - \bq\|_{\infty} \le 2\sqrt{2}d_{H}(\bp, \bq) \le 2\sqrt{2}\eps.$
\end{proof}




\section{Dynamic Learning-Augmented Search Trees}\label{sec:dynamic}

In this section, we investigate the properties of dynamic B-trees that permit modifications concurrent with sequence access. Prioritizing items that are anticipated to be accessed in the near future to reside at lower depths within the tree can significantly reduce access times. Nonetheless, updating the B-trees introduces additional costs. The overarching goal is to minimize the composite cost, which includes both the access operations across the entire sequence and the modifications to the B-trees.
We specifically concentrate on the study of locally dynamic B-trees, 
which are characterized by the restriction that tree modifications are limited solely to the adjustment of priorities for the items being accessed.

In \cref{sec:dynamic_general}, we give the total cost guarantees for any locally dynamic B-trees. In \cref{sec:dynamic_robust}, we establish the robustness guarantees in the context of imprecise priority scores, which may be given from a learning oracle. In \cref{sec:dynamic_ws}, we demonstrate that the dynamic learning-augmented B-trees with a specific priority based on the working set size — the number of distinct items requested between two consecutive accesses — achieves the working set property. Full details are included in the appendix.
Finally, in \Cref{section:dynamic_appendix}, we analyze the general dynamic B-trees with general time-varying priorities.

\subsection{Locally Dynamic B-trees} \label{sec:dynamic_general}
Our objective is to maintain a data structure that minimizes the total cost of accessing the sequence $S$ given the time-varying score 
 $\bw(i) \in(0,1)^n,i\in[m]$ associated to each item. Here, we focus on the dynamic B-trees that update the priorities of only the items being accessed at any given moment, leaving the priorities of all other items unchanged. We refer to these as \textit{locally dynamic B-trees}.

Given $n$ items, denoted as $[n]=\{1,\cdots,n\}$, and a sequence of access sequence $\mX = (x(1), \ldots, x(m))$, where $x(i)\in [n]$. At time $i\in[m]$, there exists some time-dependent score $w_{ij}$ associated with each item $j\in[n]$. Let $\bw(i) = (w_{i,1},\cdots,w_{i,n}) \in(0,1]^n$ be the time-varying score vector. The score $\bw(i)$ is defined to be \textit{locally changed} if it only differs from the previous score vector at the index of the item being accessed.  In other words, at each time $i\in\{1,2,\cdots,n\}$, we have $w_{i,j}=w_{i-1,j}$ for any $j\neq x(i)$. The \textit{locally dynamic B-Treap} is then defined as a B-Treap whose priorities are updated according to the locally changed score.
For any vector $\bw$, we write $\log \bw$ as the vector taking the element-wise $\log$ on $\bw$.
We give the guarantees in \Cref{thm:dynamicbtreap}.

\begin{theorem}[Locally-Dynamic B-Treap with Given Priorities]
\label{thm:dynamicbtreap}
Given
the locally changed scores $\bw(i)\in(0,1)^n,i\in[m]$ satisfying $\|\bw(i)\|_1=O(1)$ 
and a branching factor $B = \Omega(\ln^{1.1} n)$, there is a randomized data structure that maintains a B-Tree $T^B$ over $[n]$ such that when accessing the item $x(i)$ at time $i$, the expected depth of item $x(i)$ is $O(\log_B \frac{1}{w_{i,x(i)}}).$ 
The expected total cost for processing the whole access sequence $\mX$ is 
\[
\cost(\mX,\bw) = 
    O\left(n\log_B n +
    \sum_{i=1}^m \log_B \frac{1}{w_{i,x(i)}}
   \right).
\]
Moreover, if $B = O(n^{1/2 - \delta})$ for some $\delta > 0$, the guarantees hold with probability $1-\delta$.
\end{theorem}

The proof is an application of \Cref{thm:scoreBTreap}, where
 the priority function dynamically changes as time goes on, rather than the \emph{Static Optimality} case where the priority is fixed beforehand.
\begin{proof}[Proof of \Cref{thm:dynamicbtreap}]
    Initially, we set the priority for all items to be $1$, and insert all items into the Treap.
    For any time $i\in[n]$,
    for $j\in[n]$ such that $w_{i-1,j}\neq w_{i,j}$,
    we set
    \[
    \pri^{(i)}_{j} \coloneqq -\lfloor \log_4 \log_B \frac{1}{w_{i,j}} \rfloor + \delta_{ij}, \delta_{ij} \sim U(0, 1).
    \]
    Since $\|w(i)\|_1=O(1),i\in[m]$, by \Cref{thm:scoreBTreap},
    the expected depth of item $s(i)$ is $O(\log_B \frac{1}{w_{i,x(i)}})$. 
    The total cost for processing the sequence consists of both accessing $x(i)$ and updating the priorities. 
    The expected total cost for all the accesses is
    \[
    O\left(\sum_{i=1}^m \log_B \frac{1}{w_{i,x(i)}}\right).
    \]
    Then we will calculate the cost to update the Treap. Since the priority of an item only changes when it is accessed.
    Updating the priority of $x(i)$ from $w_{i-1, x(i)}$ to $w_{i, x(i)}$ has cost
    \[
    O\left(\left|\log_B\frac{w_{i-1,x(i)}}{w_{i, x(i)})}\right|\right).
    \]
    Hence we can bound the expected total cost for maintaining the Treap by
    \begin{align*}
    &O\left(n\log_B n + \sum_{i=2}^m  \left|\log_B\frac{w_{i-1,x(i)}}{w_{i,x(i)}}\right|\right) \\
    &= O\left(
    n\log_B n + 2\sum_{i=1}^m \log_B \frac{1}{w_{i,x(i)}}
    \right)
    .
    \end{align*}
    Together the expected total cost is
    \[
    O\left(n\log_B n +
    \sum_{i=1}^m \log_B \frac{1}{w_{i,x(i)}}\right).
    \]
    The high probability bound follows similarly as \Cref{thm:scoreBTreap}.
\end{proof}

\subsection{Robustness Guarantees} \label{sec:dynamic_robust}
We have shown that given time-varying scores associated with each item $\bw(i)$, there exists a B-Tree that gives us the total cost in terms of the scores.
In this section, we address scenarios in which precise score $\bw(i)$ are not accessible. Utilizing a learning oracle that predicts the logarithm of the score, we demonstrate that the total cost incorporates an additive term corresponding to the mean absolute error (MAE) of the logarithm of the scores  $\sum_{i=1}^m|\log_B w_{i,x(i)} - \log_B\tilde{w}_{i,x(i)}|$. We predict the logarithm of the score instead of itself to better capture the scale of it. 

\begin{theorem}[Locally Dynamic B-Treap with Predicted Scores]
\label{thm:dynamicbtreapestimation}
Given
the predicted locally changed scores $\tilde{\bw}(i) \in(0,1)^n$ satisfying $\|\tilde{\bw}(i)\|_1=O(1)$, $\tilde{w}_{i,j} \geq 1/\poly(n)$ and a branching factor $B = \Omega(\ln^{1.1}n)$
,
there is a randomized data structure that
maintains a B-Tree over the $n$ keys such that the expected total cost for processing the whole access sequence $\mX$ is
\begin{align*}
    \cost(\mX,\tilde{\bw})=
\cost(\mX,\bw) + O\left(\sum_{i=1}^m \left| \log_B w_{i,x(i)} -  \log_B \tilde{w}_{i,x(i)}\right| \right).
\end{align*}

Moreover, if $B = O(n^{1/2 - \delta})$ for some $\delta > 0$, the guarantees hold with probability $1-\delta$.
\end{theorem}
\begin{proof}
We apply \cref{thm:dynamicbtreap} with the predicted score $\tilde{\bw}(i)$, and get the expected total loss is
\begin{align*}
    &\cost(\mX,\tilde{\bw})=
    O\left(n\log_B n +
    \sum_{i=1}^m \log_B \frac{1}{\tilde{w}_{i,x(i)}}\right)\\
    \leq& \cost(\mX, \bw) + O\left(
    \sum_{i=1}^m \left| \log_B \frac{1}{\tilde{w}_{i,x(i)}} - \log_B \frac{1}{w_{i,x(i)}} \right|
    \right)\\
    =&\cost(\mX,\bw) + O\left(\sum_{i=1}^m \left| \log_B w_{i,x(i)} -  \log_B \tilde{w}_{i,x(i)}\right| \right).
\end{align*}

\end{proof}

\subsection{Working Set Property}
\label{sec:dynamic_ws}

In data structures, the working set is the collection of data that a program uses frequently over a given period. This concept is important because it helps us understand how a program interacts with memory and thus enables us to design more efficient data structures and algorithms.
For example, if a program is sorting a list, the working set might be the elements of the list it is comparing and swapping right now. The size of the working set can affect how fast the program runs. A smaller working set can make the program run faster because it means the program doesn't need to reach out to slower parts of memory as often. In other word, if we know which parts of a data structure are used most, we can organize the data or even the memory in a way that makes accessing these parts faster, which can speed up the entire program.

In this section, we construct dynamic learning-augmented B-treaps that achieve the working set property. We define the \textit{working-set size} as the number of distinct items accessed between two consecutive accesses. Correspondingly, we design a time-varying score, \textit{working-set score}, as the reciprocal of the square of one plus working-set size. We will show that the working-set score is locally changed and there exists a data structure that achieves the \textit{working-set property}, which states that the time to access an element is a logarithm of its working-set size. The formal definition of the working-set size and the main theorems in this section are presented as follows.

\begin{definition}[Previous and Next Access $\prev(i, x)$ and $\nxt(i, x)$]
\label{defn:prevNxt}
Let $\prev(i, x)$ be the previous access of item $x$ at or before time $i$, i.e, $\prev(i, x) \coloneqq \max\Set{i' \le i}{x(i') = x}.$
Let $\nxt(i, x)$ to be the next access of item $x$ after time $i$, i.e, $\nxt(i, x) \coloneqq \min\Set{i' > i}{x(i') = x}.$
\end{definition}

\begin{definition}[Working-set Size $\ws(i,x)$]
\label{defn:workingSetSize}
Define the \emph{working-set Size} $\ws(i,x)$ to be the number of distinct items accessed between the previous access of item  $x$ at or before time $i$ and the next access of item $x$ after time $i$.
That is,
\begin{align*}
    \ws(i, x) \defeq |\{x(\prev(i,x)+1),\cdots,x(\nxt(i,x))\}|.
\end{align*}
If $x$ does not appear after time $i$, we define $\ws(i, x) \coloneqq n.$
\end{definition}

\begin{theorem}[Dynamic B-Treaps with Working-set Priority]
\label{thm:btreapRIS}
    With the working-set size $\ws(i,x)$ known and the branching factor $B = \Omega(\ln^{1.1} n)$, there is a randomized data structure that maintains a B-Tree $T^B$ over $[n]$ with the priorities assigned as
    \[
    \pri(i,x)= -\lfloor\log_2\log_B(1+\ws(i,x))^2\rfloor + U(0,1).
    \]
    Upon accessing the item $x$ at time $i$, the expected depth of item $x$ is $O(\log_B (1+\ws(i,x)).$ 
    The expected total cost for processing the whole access sequence $\mX$ is 
    \begin{align*}
        \cost(\mX,\pri) = 
     O
    \left(
    n\log_B n + \sum_{i=1}^m \log_B (1+\ws(i,x))
    \right)
    \end{align*}

    In particular, if $B = O(n^{1/2 - \delta})$ for some $\delta > 0$, the guarantees hold with probability $1-\delta$.
\end{theorem}

\paragraph{Remark.}
Consider two sequences with length $m$, $\mX_1=(1,2,\cdots,n,1,2,\cdots,n,\cdots,1,2,\cdots,n)$, $\mX_2=(1,1,\cdots,1,2,2,\cdots,2,\cdots,n,n,\cdots,n)$. Two sequences have the same total cost if we have a fixed score. However, $X_2$ should have less cost because of its repeated pattern. 
Given the frequency $\freq$ as a time-invariant priority, by \Cref{lemma:SOBTree}, the optimal static costs are 
\[
\cost(\mX_1,\freq)=\cost(\mX_2,\freq)=O(m\log_B n).
\]
But for the dynamic B-Trees, with the working-set score, we calculate both costs from \Cref{thm:btreapRIS} as
\begin{align*}
   &\cost(\mX_1,\omega)=O(m\log_B (n+1)), \\
    &\cost(\mX_2,\omega)=O(n\log_B n + m\log_B 3). 
\end{align*}

This means that our proposed priority can better capture the timing pattern of the sequence and thus can even do better than the optimal static setting.

\vspace{0.1in}

The main idea to prove \Cref{thm:btreapRIS} is to show that (1) the working-set size is locally changed and (2) the corresponding priority satisfies the regularity conditions in~\cref{thm:dynamicbtreap}. To complete the proof, we introduce the interval-set size $\interval(i,x)$. See \Cref{fig:intervalSet} as an illustration.

\begin{definition}[Interval-set Size $\interval(i, x)$]
\label{defn:intervalSetSize}
Define the \emph{Interval-set Size} $\interval(i, x)$ to be the number of distinct items accessed between time $i$ and the \emph{next} access of item $x$ after time $i$.
That is,
\begin{align*}
    \interval(i, x) \coloneqq \Abs{\{
    x(i+1),\cdots, x(\nxt(i,x))\}}.
\end{align*}
If $x$ does not appear after time $i$, we define $\interval(i, x) \coloneqq n.$
\end{definition}

Furthermore, we define the working-set score as follows.
\begin{definition}
    [Working-set Score $\omega(i,x)$]
    \label{defn:ris}
    Define the time-varying priority as the reciprocal of the square of one plus working-set size. That is,
    \[
    \omega(i,x) = \frac{1}{(1+\ws(i,x))^2}
    \]
\end{definition}

\begin{figure}
    \centering
    \includegraphics[height=2.1in]{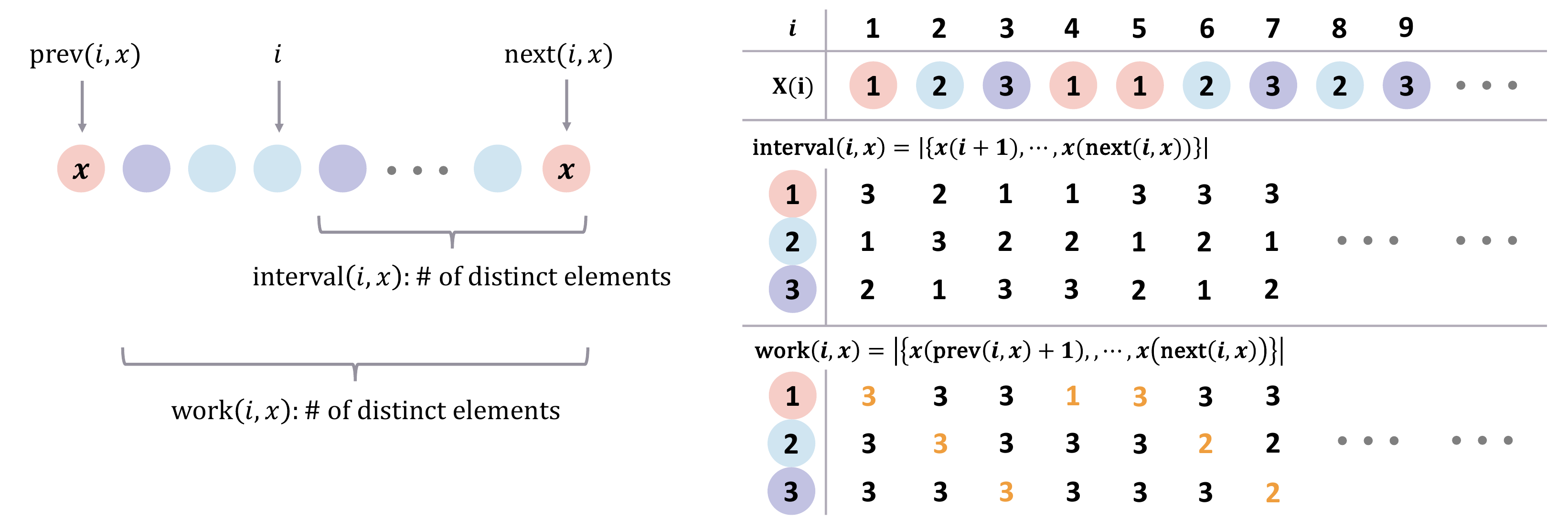}
    \caption{An example of $n=3,\mX=(1,2,3,1,1,2,3,\cdots)$. For any $i,x$, $\ws(i)$ is a permutation of $[n]$; $\interval(i,x)\geq \ws(i,x)$; $\interval(i,x)$ changes only when $x(i)=x$ (highlighted in orange).}
    \label{fig:intervalSet}
\end{figure}

Next, we will show that the interval set priority is $O(1)$ for any time $i\in[m]$ in \Cref{lemma:normBoundRIS}. The proof has three steps. Firstly, the interval-set size at time $i$ is always a permutation of $[n]$. Secondly, for any $i\in [m],x\in [n]$, the working-set size is always no less than the interval-set size. Therefore, for any $i\in[m]$,the $l_1$ norm of working-set score vector $\omega(i) \defeq (\omega(i,1),\cdots,\omega(i,n))$  can be upper bounded by $\sum_{j=1}^n 1/(1+j)^2=O(1)$.

\begin{lemma}[Norm Bound for Working-set Score]
\label{lemma:normBoundRIS}
    Fix any timestamp $i\in[m]$, 
    \[
    \sum_{j=1}^n \omega(i,j)=O(1).
    \]
\end{lemma}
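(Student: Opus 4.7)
The plan is to follow the three-step outline sketched in the paragraph immediately preceding the lemma: first establish that the working-set sizes $\{\ws(i,x)\}_{x \in [n]}$ are (essentially) a permutation of $[n]$, then prove the pointwise comparison $\interval(i,x) \ge \ws(i,x)$, and finally combine these two facts with the definition of $\isp$ to bound the sum by a convergent series.

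For the permutation claim, I would order the items by their next access time after step $i$. Let $j_1, j_2, \ldots, j_k$ be the items that reappear after time $i$, sorted by increasing $\nxt(i,\cdot)$. A short induction on $s$ shows that the distinct items appearing in the prefix $x(i+1), \ldots, x(\nxt(i,j_s))$ are exactly $\{j_1, \ldots, j_s\}$, so $\ws(i,j_s) = s$. The items that do not reappear after $i$ are all assigned $\ws = n$ by convention; there are at most $n$ such items, and each contributes $O(1/n^2)$ to the sum, so they collectively contribute $O(1)$.

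For the monotonicity claim, the key observation is that $\prev(i,x) \le i < \nxt(i,x)$, so the sequence $x(i+1), \ldots, x(\nxt(i,x))$ is a suffix of $x(\prev(i,x)+1), \ldots, x(\nxt(i,x))$. Hence the set of distinct items in the former is a subset of the set of distinct items in the latter, giving $\ws(i,x) \le \interval(i,x)$. The corner case where $x$ does not appear after time $i$ is already handled because both quantities are set to $n$ by definition.

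Combining the two steps and using the definition $\isp(i,x) = 1/(1+\interval(i,x))^2$,
\[
\sum_{j=1}^n \isp(i,j) \;\le\; \sum_{j=1}^n \frac{1}{(1+\ws(i,j))^2} \;\le\; \sum_{s=1}^{\infty} \frac{1}{(1+s)^2} + n \cdot \frac{1}{(1+n)^2} \;=\; O(1),
\]
where the two terms account for items that do and do not reappear after time $i$, respectively. I do not anticipate a serious obstacle: both component claims are elementary combinatorial facts about distinct-element counts along the access sequence, and the final bound follows from the convergence of $\sum_s 1/s^2$. The only point requiring care is the boundary convention for items with no subsequent access, but this is already baked into Definitions \ref{defn:intervalSet} and \ref{defn:workingSet}.
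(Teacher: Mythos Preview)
Your proposal is correct and follows essentially the same three-step argument as the paper: show that the working-set sizes at a fixed time are (close to) a permutation of $[n]$, show $\interval(i,x)\ge\ws(i,x)$ via $\prev(i,x)\le i$, and then bound by the convergent series $\sum_s 1/(1+s)^2$. If anything, you are slightly more careful than the paper about the boundary case of items that do not reappear after time $i$ (where multiple items may share $\ws=n$, so the values are not literally a permutation); your separate $n\cdot 1/(1+n)^2$ term handles this cleanly.
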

\begin{proof}
    We first show that at any time $i\in[m]$, the interval-set size is a permutation of $[n]$. By definition, $\interval(i,x)$ is the number of items $y$ such that $\nxt(i,y)\leq \nxt(i,x)$. Let $\pi_1,\cdots,\pi_n$ be a permutation of all items $[n]$ in the order of increasing $\nxt(i,x)$. Then for any item $x$, $\interval(i,x)$ is the index of $x$ in $\pi$. So the sum of the reciprocal of squared $\interval(i,x)$ is upper bounded by
    \[
    \sum_{j=1}^n \frac{1}{(1+\interval(i,j))^2} = \sum_{j=1}^n \frac{1}{(1+j)^2}=\Theta(1).
    \]
    Secondly, recall that $\prev(i,x)\leq i$, and hence for any $i\in[m],x\in[n]$, $\ws(i,x)\geq \interval(i,x).$ So we have the upper bound for working-set score as follows.
    \begin{align*}
    \sum_{j=1}^n \omega(i,j) &= 
    \sum_{j=1}^n \frac{1}{(1+\ws(i,x))^2} \\
    &\leq \sum_{j=1}^n \frac{1}{(1+\interval(i,j))^2}  \\
    &= O(1).
    \end{align*}
\end{proof}

Since we have shown the working-set score has constant $l_1$ norm and in each timestamp, we only update one item's priority. We are ready to prove and show the efficiency of the corresponding B-Treap.

\begin{proof}[Proof of \Cref{thm:btreapRIS}]
     By \Cref{lemma:normBoundRIS}, we know $\|\omega(i)\|=O(1),\forall i\in [m]$. Also by definition of $\ws(i,x)$, for any $x\in[n]$, $\ws(i-1,x)\neq \ws(i,x)$ only when $x(i)=x$. So for each time $i$, at most one item (i.e., $x(i)$) changes its priority. So we apply \Cref{thm:dynamicbtreap} with $w_{i,j}=\omega(i,x),i\in[m],x\in[n]$, and get the total cost
    \begin{align*}
        \cost(\mX,\omega) 
     = O
    \left(
    n\log_B n + \sum_{i=1}^m \log_B (1+\ws(i,x(i)))
    \right)
    \end{align*}
\end{proof}


Furthermore, we use the following theorem to show the robustness of the results when the scores are inaccurate. This is a direct corollary of~\cref{thm:dynamicbtreapestimation}.

\begin{theorem}
    [Locally-Dynamic B-Treaps with Predictions]
    \label{thm:btreapRISestimation}
    Given
    the predicted locally changed working-set score $\tilde{\omega}(i) \in(0,1)^n$ satisfying $\|\tilde{\omega}(i)\|_1=O(1)$, $\tilde{\omega}_{i,j} \geq 1/\poly(n)$ and the branching factor  $B = \Omega(\ln^{1.1}n)$,
    there is a randomized data structure that
    maintains a B-Tree over the $n$ keys such that the expected total cost for processing the whole access sequence $\mX$ is
    \begin{align*}
    \cost(\mX,\tilde{\omega})=
    \cost(\mX,\omega) + O\left(\sum_{i=1}^m \left| \log_B \omega_{i,x(i)} -  \log_B \tilde{\omega}_{i,x(i)}\right| \right).
    \end{align*}
    In particular, if $B = O(n^{1/2 - \delta})$ for some $\delta > 0$, the guarantees hold with probability $1-\delta$.
\end{theorem}

\subsection{General Results for Dynamic B-Trees}
\label{section:dynamic_appendix}
In this section, we give the results for general dynamic B-trees. We first construct the dynamic B-Treaps and give the guarantees when we have access to the real-time priorities for each item in \Cref{sec:dynamic_appendix_general}. Then we analyze the dynamic B-trees given
 the estimation the time-varying priorities in \Cref{sec:dynamic_appendix_predict}.
We use the same notation in \Cref{sec:dynamic}.


\subsubsection{Dynamic B-Treap with Given Priorities}
\label{sec:dynamic_appendix_general}

\begin{theorem}[Dynamic B-Treap with Given Priorities]
\label{thm:dynamicGeneral}
Given
the time-varying scores $\bw(i)\in(0,1)^n,i\in[m]$ satisfying $\|\bw(i)\|_1=O(1)$ 
and a branching factor $B = \Omega(\ln^{1.1} n)$, there is a randomized data structure that maintains a B-Tree $T^B$ over $[n]$ such that when accessing the item $x(i)$ at time $i$, the expected depth of item $x(i)$ is $O(\log_B \frac{1}{w_{i,x(i)}}).$ 
The expected total cost for processing the whole access sequence $\mX$ is 
\begin{align*}
\cost(\mX,\bw) = 
    O\left(n\log_B n +
    \sum_{i=1}^m \log_B \frac{1}{w_{i,x(i)}}
    + \sum_{i=2}^m\sum_{j=1}^n \left|\log_B \frac{1}{w_{i,j}} - \log_B \frac{1}{w_{i-1,j}}\right|\right). 
\end{align*}
In particular, if $B = O(n^{1/2 - \delta})$ for some $\delta > 0$, the guarantees hold with probability $1-\delta$.
\end{theorem}

\begin{proof}
    Initially, we set the priority for all items to be $1$, and insert all items into the Treap.
    For any time $i\in[n]$,
    for $j\in[n]$ such that $w_{i-1,j}\neq w_{i,j}$,
    we set
    \[
    \pri^{(i)}_{j} \coloneqq -\lfloor \log_4 \log_B \frac{1}{w_{i,j}} \rfloor + \delta_{ij}, \delta_{ij} \sim U(0, 1).
    \]
    Since $\|w(i)\|_1=O(1),i\in[m]$, by \Cref{thm:scoreBTreap},
    the expected depth of item $s(i)$ is $O(\log_B \frac{1}{w_{i,x(i)}})$. 
    The total cost for processing the sequence consists of both accessing $x(i)$ and updating the priorities. 
    The expected total cost for all the accesses is
    \[
    O\left(\sum_{i=1}^m \log_B \frac{1}{w_{i,x(i)}}\right).
    \]
    Then we will calculate the cost to update the Treap.
    Updating the priority of $j$ from $w_{i-1, j}$ to $w_{i, j}$ has cost $O(|\log_B(w_{i-1,j}/w_{i, j})|)$.
    Hence we can bound the expected total cost for maintaining the Treap by
    \[
    O\left(n\log_B n + \sum_{i=2}^m \sum_{j=1}^n \left|\log_B\frac{w_{i-1,j}}{w_{i,j}}\right|\right).
    \]
    Together the expected total cost is
    \[
    O\left(n\log_B n +
    \sum_{i=1}^m \log_B \frac{1}{w_{i,x(i)}}
    + \sum_{i=2}^m\sum_{j=1}^n \left|\log_B \frac{1}{w_{i,j}} - \log_B \frac{1}{w_{i-1,j}}\right|\right).
    \]
    The high probability bound follows similarly as \Cref{thm:scoreBTreap}.
\end{proof}

\paragraph{Remark.} The total cost for processing the access sequence has three terms. The first two terms are the same as in the static optimality bound, while the third term is incurred from updating the scores. Hence, here is a trade-off between the costs of updating items and the benefits from the time-varying scores. Moreover, the locally-dynamic B-trees can avoid the high cost of keeping updating the scores because only one score is changed per time.

\subsubsection{Dynamic B-Treap with Predicted Priorities}
\label{sec:dynamic_appendix_predict}

In this section, we give the guarantees for the dynamic B-Treaps with predicted priorities learned by a machine learning oracle. Similar as in \Cref{sec:dynamic_appendix_predict}, we here predict $\log_B \frac{1}{w_{i,j}}$ to better capture the scale of the scores. And we will find that the total cost using the B-Trees using the predicted scores is equal to the cost using the accurate priorities plus an additive error that is linear in the mean absolute error of our prediction scores:
\[
\sum_{i=1}^m \sum_{j=1}^n \left|
\log_B\frac{1}{w_{i,j}} - \log_B\frac{1}{\tilde{w}_{i,j}}
\right|.
\]

\begin{theorem}[Dynamic B-Treap with Predicted Scores]
\label{thm:dynamic_appendix_robust}
Given
the predicted time-varying scores $\tilde{\bw}(i) \in(0,1)^n$ satisfying $\|\tilde{\bw}(i)\|_1=O(1)$, $\tilde{w}_{i,j} \geq 1/\poly(n)$ and a branching factor $B = \Omega(\ln^{1.1}n)$
,
there is a randomized data structure that
maintains a B-Tree over the $n$ keys such that the expected total cost for processing the whole access sequence $\mX$ is
\begin{align*}
  \cost(\mX,\tilde{\bw})=
\cost(\mX,\bw) + O\left(
   \sum_{i=1}^{m}\sum_{j=1}^n \left| \log_B \frac{1}{w_{i,j}} - \log_B \frac{1}{\tilde{w}_{i,j}} \right|
   \right)
\end{align*}
In particular, if $B = O(n^{1/2 - \delta})$ for some $\delta > 0$, the guarantees hold with probability $1-\delta$.
\end{theorem}
\begin{proof}
We apply \cref{thm:dynamicGeneral} with score $\wt{\bw}$, and get the expected depth of $x(i)$ is
\[
O\left(\log_B \frac{1}{\tilde{w}_{i,j}}\right).
\]
The expected total cost is
\begin{align*}
    \cost(\mX,\tilde{\bw})=&
    O\left(n\log_B n +
    \sum_{i=1}^m \log_B \frac{1}{\tilde{w}_{i,x(i)}}
    + \sum_{i=2}^m  \sum_{j=1}^n\left|\log_B \frac{1}{w_{i,j}} - \log_B \frac{1}{\tilde{w}_{i-1,j})}\right|
   \right)\\
   =& \cost(\mX,\bw) + O\left(
   \sum_{i=1}^m \left|
   \log_B \frac{1}{w_{i,x(i)}} - 
   \log_B \frac{1}{\tilde{w}_{i,x(i)}}
   \right|\right) \\
   &+O\left(\sum_{i=2}^m\sum_{j=1}^n \left| \log_B \frac{1}{w_{i,j}} - \log_B \frac{1}{\tilde{w}_{i,j}} \right|
   + \sum_{i=1}^{m-1}\sum_{j=1}^n \left| \log_B \frac{1}{w_{i,j}} - \log_B \frac{1}{\tilde{w}_{i,j}} \right|
   \right)\\
   =& \cost(\mX,\bw) + O\left(
   \sum_{i=1}^{m}\sum_{j=1}^n \left| \log_B \frac{1}{w_{i,j}} - \log_B \frac{1}{\tilde{w}_{i,j}} \right|
   \right)
\end{align*}

\end{proof}
\section{Additional Experiments on Inaccurate Prediction Oracle}
\label{sec:appendix_experiments}

\subsection{Mixture of Distributions}
We consider a setting where the actual data follows a mixture of two distributions while the frequency predictor provides only a single distribution. Specifically, we assume that the predicted frequency follows the adversarial distribution, whereas the actual access sequence is generated by a mixture of two distributions: with probability $w$, the item follows the adversarial distribution and with probability $1-w$, it follows the Zipfian distribution (\cref{fig:25zipmix}, \cref{fig:50zipmix}, \cref{fig:75zipmix}) or uniform distribution (\cref{fig:25unimix}, \cref{fig:50unimix}, \cref{fig:75unimix}). 
We set $n=1000$ and vary $m$ over $[2000, 6000, 10000, 16000, 20000]$. The $x$-axis represents the number of unique items, and the $y$-axis denotes the number of comparisons made, which measures access cost.
We compare our Treaps against Splay trees and randomized Treaps. Our experiments show that our Treaps outperform both alternatives, even when $75\%$ of the data comes from an unknown distribution (either Zipfian or uniform). Furthermore, as the prediction quality decreases, the performance of our Treaps remains stable, demonstrating their robustness.
\begin{figure}[!ht]
  \centering
  \captionsetup{font=small, aboveskip=4pt, belowskip=4pt}
  \begin{subfigure}[b]{0.48\columnwidth}
    \centering
    \includegraphics[width=\linewidth,trim=10 10 10 10,clip]{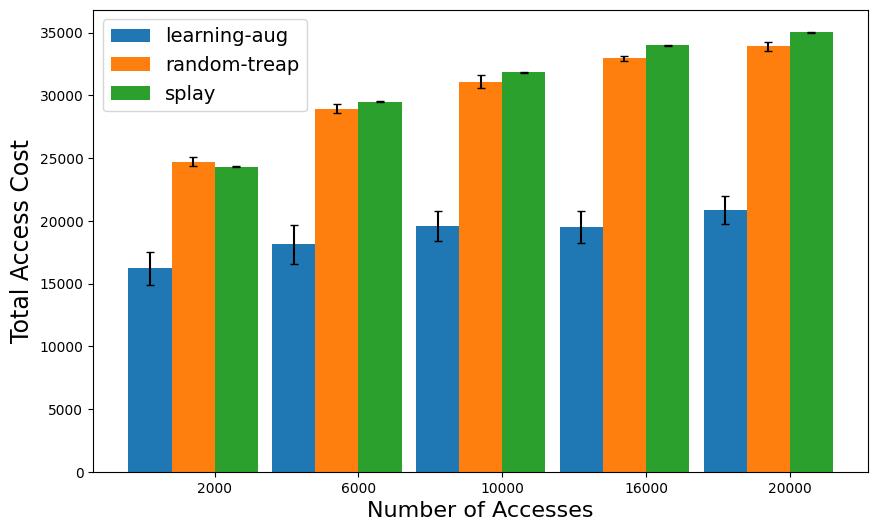}
    \caption{Zipfian mix, 25\%}
    \label{fig:25zipmix}
  \end{subfigure}\hfill
  \begin{subfigure}[b]{0.48\columnwidth}
    \centering
    \includegraphics[width=\linewidth,trim=10 10 10 10,clip]{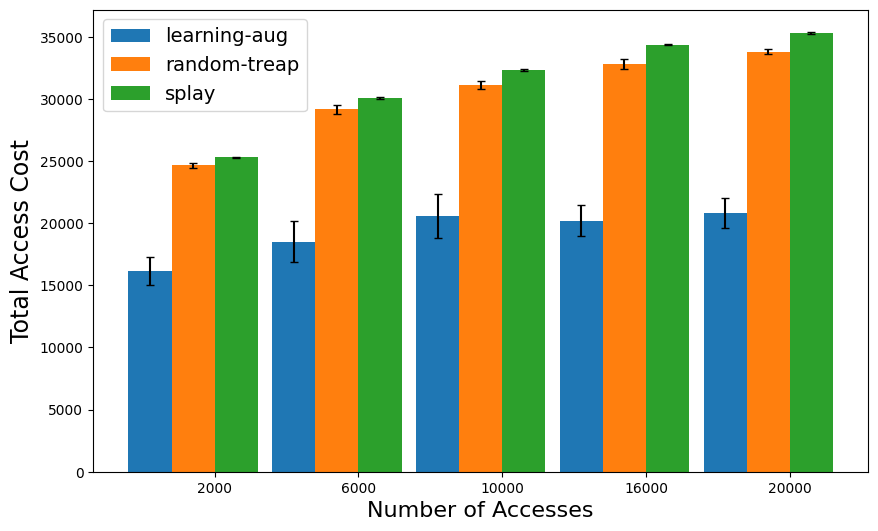}
    \caption{Uniform mix, 25\%}
    \label{fig:25unimix}
  \end{subfigure}

  \vspace{6pt}

  \begin{subfigure}[b]{0.48\columnwidth}
    \centering
    \includegraphics[width=\linewidth,trim=10 10 10 10,clip]{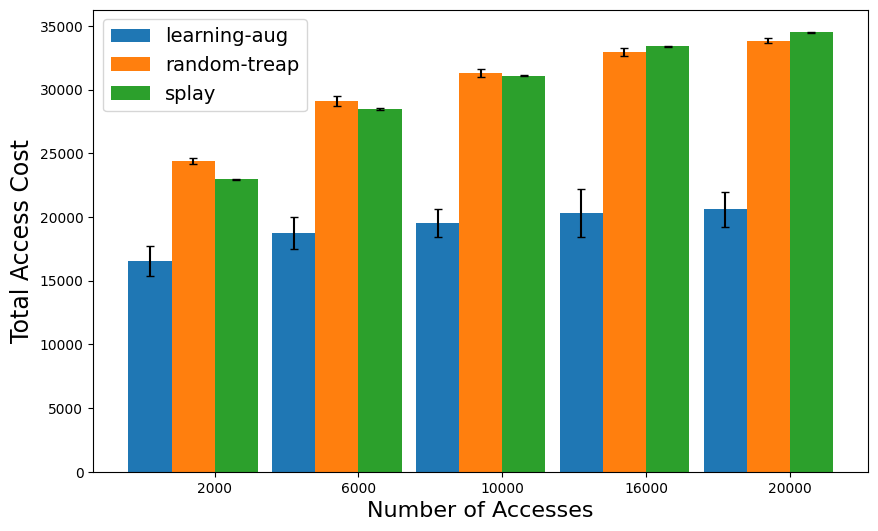}
    \caption{Zipfian mix, 50\%}
    \label{fig:50zipmix}
  \end{subfigure}\hfill
  \begin{subfigure}[b]{0.48\columnwidth}
    \centering
    \includegraphics[width=\linewidth,trim=10 10 10 10,clip]{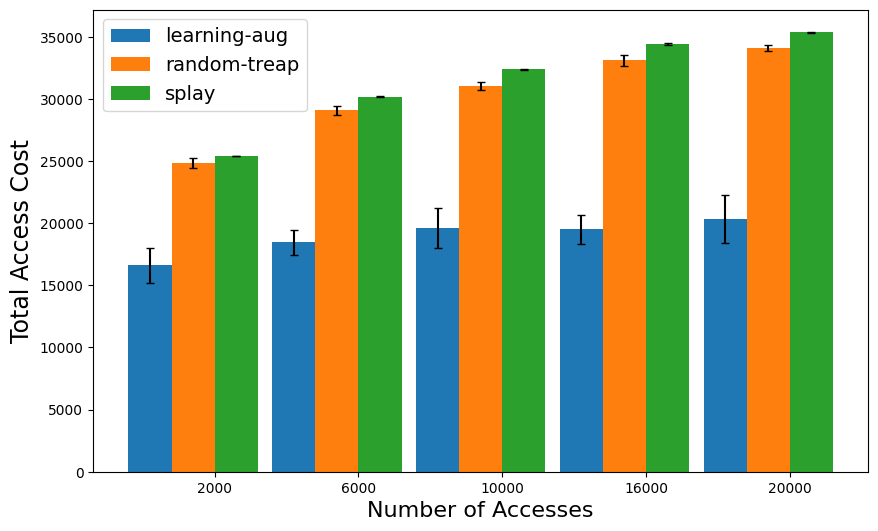}
    \caption{Uniform mix, 50\%}
    \label{fig:50unimix}
  \end{subfigure}

  \vspace{6pt}

  \begin{subfigure}[b]{0.48\columnwidth}
    \centering
    \includegraphics[width=\linewidth,trim=10 10 10 10,clip]{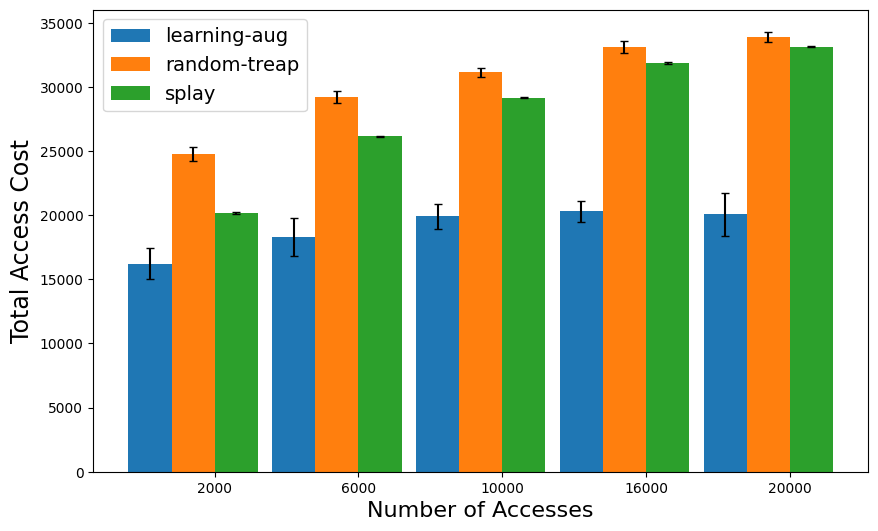}
    \caption{Zipfian mix, 75\%}
    \label{fig:75zipmix}
  \end{subfigure}\hfill
  \begin{subfigure}[b]{0.48\columnwidth}
    \centering
    \includegraphics[width=\linewidth,trim=10 10 10 10,clip]{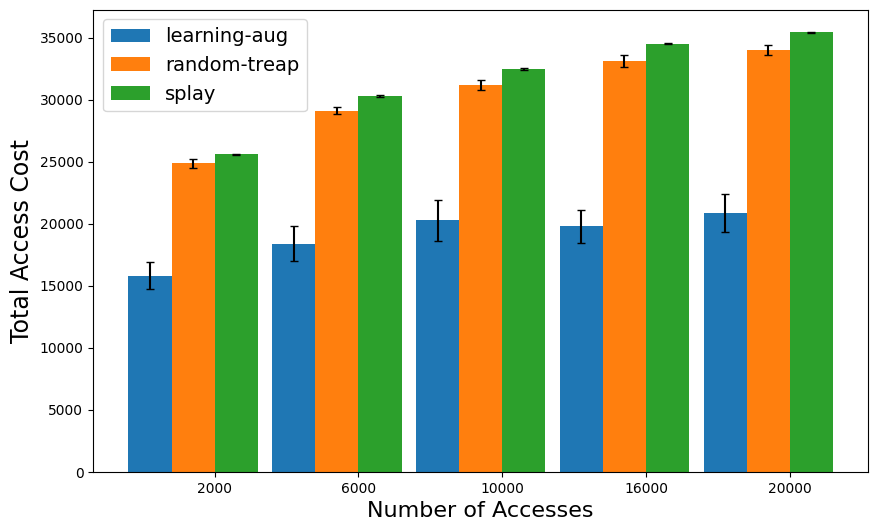}
    \caption{Uniform mix, 75\%}
    \label{fig:75unimix}
  \end{subfigure}

  \caption{Error rates under mixture distributions. Left column: Zipfian mixing; Right column: uniform mixing.}
  \label{fig:mixerr}
\end{figure}
%



\end{document}